\newtheorem{definition}{Definition}
\newtheorem{theorem}{Theorem}
\newtheorem{lemma}{Lemma}
\newtheorem{proof}{Proof}
\newtheorem{corollary}{Corollary}
\begin{document}
%
\title{Communication Efficient Multiparty Private Set Intersection from\\Multi-Point Sequential OPRF}





%
\author{\IEEEauthorblockN{Xinyu Feng\IEEEauthorrefmark{2}\IEEEauthorrefmark{3}\IEEEauthorrefmark{1},
Yukun Wang\IEEEauthorrefmark{2}\IEEEauthorrefmark{1},
Cong Li\IEEEauthorrefmark{4}\textsuperscript{\Letter},
Wu Xin\IEEEauthorrefmark{5},
Ming Yao\IEEEauthorrefmark{2}, 
Dian Zhang\IEEEauthorrefmark{2}, 
Wanwan Wang\IEEEauthorrefmark{2} and
Hao He\IEEEauthorrefmark{2}}
\IEEEauthorblockA{\IEEEauthorrefmark{2}InsightOne Tech Co., Ltd.}
\IEEEauthorblockA{\IEEEauthorrefmark{3}Zhonghe Tech (Xiong'an) Co., Ltd.}
\IEEEauthorblockA{\IEEEauthorrefmark{4}School of Software and Microelectronics, Peking University}
\IEEEauthorblockA{\IEEEauthorrefmark{5}School of Economics, Peking University}
\IEEEauthorblockA{\textsuperscript{\Letter}Corresponding Author}
\IEEEauthorblockA{\IEEEauthorrefmark{1}The two authors contributed equally to this work.}}


\maketitle

\begin{abstract}
Multiparty private set intersection (MPSI) allows multiple participants to compute the intersection of their locally owned data sets without revealing them. MPSI protocols can be categorized based on the network topology of nodes, with the star, mesh, and ring topologies being the primary types, respectively. Given that star and mesh topologies dominate current implementations, most existing MPSI protocols are based on these two topologies. However, star-topology MPSI protocols suffer from high leader node load, while mesh topology protocols suffer from high communication complexity and overhead. In this paper, we first propose a multi-point sequential oblivious pseudorandom function (MP-SOPRF) in a multi-party setting. Based on MP-SOPRF, we then develop an MPSI protocol with a ring topology, addressing the challenges of communication and computational overhead in existing protocols. We prove that our MPSI protocol is semi-honest secure under the Hamming correlation robustness assumption. Our experiments demonstrate that our MPSI protocol outperforms state-of-the-art protocols, achieving a reduction of 74.8\% in communication and a 6\% to 287\% improvement in computational efficiency.
\end{abstract}


%
\IEEEpeerreviewmaketitle

\section{Introduction}\label{sec:introduction}

Private set intersection (PSI) is a versatile cryptographic instrument that facilitates two participants, each with distinct local data sets, in determining the common elements of their datasets without exposing the nonintersecting components. As an extension of PSI, multiparty PSI (MPSI) enables multiple participants to ascertain their common data elements without revealing nonintersecting data, in a similar manner. MPSI has numerous practical applications across various domains. In particular, within vertical federated learning (VFL), MPSI is employed for data alignment based on the local data identifiers of each participant, serving as an essential prerequisite~\cite{lu2020multi}. In the domain of medical data collection, MPSI presents a promising technology for consolidating medical data from numerous disparate entities while ensuring the privacy of sensitive data~\cite{miyaji2017privacy}. Furthermore, in the field of cybersecurity protection, MPSI is instrumental for multiple organizations to collaboratively identify malicious intrusions on public networks without compromising the confidentiality of other users' information~\cite{kolesnikov2017practical, inbar2018efficient}.

\begin{figure}[htb]
    \centering
    \subfigure[Star Topology\label{fig:star-topology}]{\includegraphics[width=0.48\linewidth]{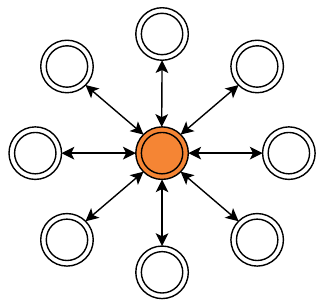}}\hspace{10pt}
    \subfigure[Mesh Topology]{\includegraphics[width=0.45\linewidth]{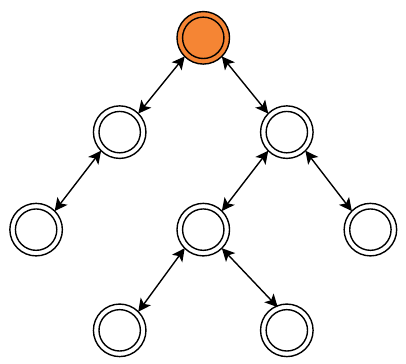}}\\
    \subfigure[Ring Topology]{\includegraphics[width=0.45\linewidth]{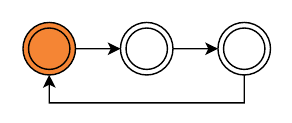}}
    \caption{Network Topologies of MPSI}
  \label{fig:network-topology}
\end{figure}

Within existing MPSI protocols, there are primarily three types of topological structures from a communication model perspective, namely, star topology, mesh topology, and ring topology \footnote{In various academic works, the ring topology is occasionally referred to as the ring topology~\cite{vos2023sok}.}, as depicted in Fig.~\ref{fig:network-topology}. The star topology is characterized by multiple assistants engaging with a sole leader, which is predominantly observed among MPSI protocols such as public-key-based MPSI \cite{bay2021multi,bay2021practical,vos2022fast}, or those utilizing oblivious programmable pseudorandom functions (OPPRF) and oblivious key-value stores (OKVS) \cite{chandran2021efficient,garimella2021oblivious}. Nevertheless, this ``centralized'' configuration results in substantial bandwidth and computational demands on the leader. The mesh topology serves as an enhancement to the star topology by mitigating the communication load and computational pressure on the leader, bearing resemblance to a binary tree structure. This topology is adopted by several of the most efficient MPSI protocols \cite{kolesnikov2017practical,nevo2021simple}. However, the complexity of this network increases, necessitating a robust overall network and sufficient bandwidth for nodes to possibly engage with multiple other nodes. Ring topology, originally proposed by Kavousi et al. \cite{kavousi2021efficient}, involves the sequential transmission of messages among all participants, with both the leader and the assistants assuming comparable protocol roles. In this configuration, the responsibility for the determination of the intersection resides exclusively with the leader at the end of the protocol. Unlike the other two topological frameworks, the ring topology offers distinct advantages: (1) It equitably allocates the communication load among all participants, thus preventing potential bottlenecks associated with a single leader in the star topology. (2) The ring structure reduces the dependency on any single node, thereby improving the resilience of the system to node failures and dropout attacks. (3) It supports scalability for larger networks by minimizing communication overhead, which increases exponentially within the mesh topology as the number of participants increases.

\subsection{Motivation}
In ring topology, embedding multi-party data in a secure end efficient way is quite challenging since each node merely performs two information transmissions, that is, receiving data from the previous node and sending data to the next node. For this reason, each party is required to assemble their own data and received data, then pass it to the next party. Hence, it is critical to carefully ensure that the leader's capability, only determining the overall intersection of all participants' data without revealing any partial intersection.


There are currently two main approaches to build MPSI protocols, partially homomorphic encryption (PHE)-based MPSI and OT-based MPSI. PHE-based protocols typically leverage polynomials or bloom filters to conceal the data sets and then share the coefficients among the participants to calculate the intersection. And PHE is employed to safeguard the security of coefficients throughout the sharing process. However, the coefficient sharing process depends on interactions among multiple participants, inherently forming a mesh topology, thus making the implementation of a ring communication topology be a tough task. On the other hand, most of the existing OT-based MPSI protocols rely on fundamental cryptographic primitives to obfuscate the data sets of the parties, such as oblivious pseudorandom function (OPRF)~\cite{kolesnikov2016efficient}, oblivious programmable PRF (OPPRF)~\cite{kolesnikov2017practical}, and oblivious key-value stores (OKVS)~\cite{chandran2021efficient}. These cryptographic primitives are inherently based on a two-party setup. As a result, to compute the intersection, it is essential to share the intermediate results produced by these primitives among various parties. The secret sharing scheme is leveraged to protect the sharing process. Unfortunately, it is still a protocol within a mesh topology, thereby becoming a bottleneck in realizing the ring topology. Consequently, we raise the following issue: 

\textit{Can we design an efficient MPSI protocol equipped with ring topology?}

\subsection{Contribution}
To cope with the aforementioned issue, in this paper, we design a novel MPSI protocol employing the ring topology, which significantly decreases communication and computational overhead compared to the existing state-of-the-art (SOTA) protocols. Table~\ref{tab:complexity} shows a brief comparison between our MPSI and the SOTA ones on communication and computational overhead. Our contributions are summarized as follows.

\begin{itemize}[itemsep=3pt,topsep=0pt,parsep=0pt,leftmargin=*]
    \item[$\bullet$] We for the first time propose the MPSI protocol with the ring topology alone. Unlike the existing MPSI protocols, Ours maintains identical and independent asymptotic complexities of communication and computation (i.e. $\bigO{N}$ and $N$ denotes the input set size) between the leader and assistant. It effectively reduces the communication and computational burdens on the Leader, which renders him operate at the same level with assistant in regard to the bandwidth and computational capacity.

    \item[$\bullet$] To build the MPSI protocol with the ring topology, we present the notion of multi-point sequential oblivious pseudorandom function (MP-SOPRF). Our MP-SOPRF leverages the output of the previous round of oblivious transfer (OT) for each subsequent node and works in a multi-party setting. Besides, all participants can compute the auxiliary matrix and carry out operations of base OT simultaneously in the initial step. Then in the final OT step, the random OT (ROT) is introduced to substitute the traditional OT for improving the efficiency of protocol.

    \item[$\bullet$] We conducted numerous experiments to validate the effectiveness of our MPSI protocol. Results revealed that our MPSI significantly outperforms the SOTA MPSI protocols in both of communication and computational efficiencies. Concretely, it achieves communication $74.8\%$ reduction and computational efficiency by up to $2.87\times$ in contrast to SOTA ones (15 parties and set size is $2^{20}$).
\end{itemize}
\begin{table}[t]
\centering
\caption{Comparison of similar MPSI protocols, $n$ is the number of parties, $t$ is the number of maximum collusion parties, $N$ is the size of input sets, $h$ is the number of hash functions.}
\label{tab:complexity}
\renewcommand\arraystretch{1.3}
\setlength{\tabcolsep}{1mm}{
\resizebox{\linewidth}{!}{
\begin{tabular}{cccccc}
\hline
\multicolumn{1}{|c|}{\multirow{2}{*}{\textbf{Protocol}}} & \multicolumn{3}{c|}{\textbf{Communication}}         & \multicolumn{2}{c|}{\textbf{Computation}} \\ \cline{2-6}
  \multicolumn{1}{|c|}{}       & \multicolumn{1}{c|}{Topology} & \multicolumn{1}{c|}{Leader} & \multicolumn{1}{c|}{Assistant} & \multicolumn{1}{c|}{Leader}  & \multicolumn{1}{c|}{Assistant} \\ \hline        
\multicolumn{1}{|c|}{KMPRT\cite{kolesnikov2017practical}} & \multicolumn{1}{c|}{Mesh}  & \multicolumn{1}{c|}{\bigO{nN}}  & \multicolumn{1}{c|}{\bigO{tN}} & \multicolumn{1}{c|}{\bigO{n}} & \multicolumn{1}{c|}{\bigO{tN}} \\ \hline
\multicolumn{1}{|c|}{KMS\cite{kavousi2021efficient}}   & \multicolumn{1}{c|}{Ring} & \multicolumn{1}{c|}{\bigO{nN}}  & \multicolumn{1}{c|}{\bigO{hN}} & \multicolumn{1}{c|}{\bigO{nN}}    & \multicolumn{1}{c|}{\bigO{hN}} \\ \hline
\multicolumn{1}{|c|}{NTY\cite{nevo2021simple}}   & \multicolumn{1}{c|}{Mesh}  & \multicolumn{1}{c|}{\bigO{tN}}  & \multicolumn{1}{c|}{\bigO{N}}  & \multicolumn{1}{c|}{\bigO{nN-tN}} & \multicolumn{1}{c|}{\bigO{tN}} \\ \hline
\multicolumn{1}{|c|}{VCE\cite{vos2022fast}}   & \multicolumn{1}{c|}{Star}  & \multicolumn{1}{c|}{\bigO{tN}}  & \multicolumn{1}{c|}{\bigO{N}}  & \multicolumn{1}{c|}{\bigO{nhN}}   & \multicolumn{1}{c|}{\bigO{N}} \\ \hline
\multicolumn{1}{|c|}{Ours}    & \multicolumn{1}{c|}{Ring} & \multicolumn{1}{c|}{\bigO{N}}   & \multicolumn{1}{c|}{\bigO{N}}  & \multicolumn{1}{c|}{\bigO{N}} & \multicolumn{1}{c|}{\bigO{N}} \\ \hline
\end{tabular}}}
\end{table}

\section{Related work}\label{sec:related-work}

Meadows~\cite{meadows1986more} proposed the first PSI protocol. Then a series of lituratures~\cite{kolesnikov2016efficient,pinkas2019spot,chase2020private,rindal2021vole, raghuraman2022blazing,bui2022private} explored PSI with various features and better performance. Freedman et al.~\cite{freedman2005keyword} presented the first MPSI protocol, which is an extension of PSI in the multi-party setting. Since then, numerous efficient MPSI protocols have been proposed. They can be primarily divided into three categories, namely, public key-based MPSI, OT-based MPSI and other MPSI.

Kissner and Song~\cite{kissner2005privacy} presented an MPSI protocol based on the polynomial root representation. In their protocol, After encoding the sets using the roots of a polynomial, each party encrypts the coefficients using an additive homomorphic encryption system, such as Paillier encryption. 
Li and Wu~\cite{li2007unconditionally} employed the secret sharing scheme to replace the homomorphic encryption. 
Subsequently, a series of works~\cite{patra2009information, patra2009round} made efforts to improve the Li and Wu~\cite{li2007unconditionally}'s protocol. 
Ruan et al.~\cite{ruan2019new} put forward the first bitset-based PSI protocol. 
Bay et al.~\cite{bay2021multi, bay2021practical} further extended Ruan's protocol \cite{ruan2019new} to support a multi-party setting, enabling to resist the collusion attack among up to $n-1$ parties.
In 2022, Vos et al.~\cite{vos2022fast} strengthened the 'AND' operation for private set elements based on elliptic curves, making MPSI protocols sutiable for both of the small and large sets.

OT-based MPSI protocols utilize OT as the fundamental tool during the protocol execution. More specific, there are two types for these MPSI protocols according to the role of OT in the entire protocol. Fristly, OT works a underling building block to construct other cryptographic tools such as OPRF, OPPRF, OKVS, etc. Then these tools are involved in the MPSI protocol. Secondly, OT combined with garbled bloom filter (GBF) is utilized to build the MPSI protocol. In 2017, Kolesnikov et al.~\cite{kolesnikov2017practical} brought forward an efficient MPSI protocol and provided an implementation of it. It is the first coded MPSI protocol, which is significant in the development of MPSI protocols. 
Chandran et al.~\cite{chandran2021efficient} strengthened  Kolesnikov et al.'s~\cite{kolesnikov2017practical} protocol, removing the expensive XOR-based secret sharing scheme and then replacing it with the efficient Shamir's secret sharing scheme.
Garimella et al.~\cite{garimella2021oblivious} proposed OKVS  to reduce the communication costs required in OPPRF-based protocols.
Nevo et al.~\cite{nevo2021simple} designed novel MPSI protocols based on OKVS. Their protocols consider the condition that the participants collude after being corrupted by a malicious adversary.
Inbar et al.~\cite{inbar2018efficient} provided three protocols that extend the PSI protocol based GBF\cite{dong2013private} to a multi-party setting.
Recently, Ben-Efraim et al.~\cite{ben2022psimple} further extended Inbar et al's protocol\cite{inbar2018efficient} to achieve security in the malicious model.

Apart from the aforementioned traditional MPSI protocols, there also exist MPSI protocols focusing on achieving special functionalities. Extensive works~\cite{freedman2004efficient, ghosh2019communication, badrinarayanan2021multi, liu2023efficient} explored to threshold MPSI, which adds threshold constraints to traditional multi-party PSI. In the meantime, some other works~\cite{mohassel2020fast, debnath2021secure, fenske2022accountable, trieu2022multiparty, yang2024empsi} studied on MPSI with cardinality (MPSI-CA), supporting multiple parties to collaboratively calculate the size of their set intersection without revealing the actual intersection elements.

\section{Preliminaries}\label{sec:preliminaries}
\subsection{Notation}
We use $\lambda, \sigma$ to denote the computational and statistical security parameters, and use $[n]$ to represent the set of positive integers less than n, i.e., $\{1,2,3,\dots,n\}$. The symbols $P_1,\dots,P_n$ denote the set of $n$ parties, where each party owns the input set $X_i$ for $i\in [n]$. $P_1$ is assumed to be the leader who calculates the final intersection results, while other parties are regarded as assistants. For a vector $v$, we use $v[j]$ to denote its $j$-th element. For $n\times m$ size matrix $M$, we use $M_j$ to denote its $j$-th column where $j\in [m]$. $M^i$ represents the matrix $M$ belongs to party $P_i$ and $||x||_\mathrm{H}$ denotes the hamming weight of string $x$. The symbol $\negl[\lambda]$ denotes a negligible function that $\mathsf{negl}(\lambda)<1/p(\lambda)$ holds for any polynomial $p(\cdot)$.

\subsection{Random oblivious transfer}
Oblivious Transfer (OT)~\cite{rabin2005exchange, naor2001efficient} is a fundamental cryptographic protocol allowing the sender to transfer one of multiple messages to the receiver without revealing which message was sent, and the receiver learns nothing about the other messages. Fig.~\ref{fig:FOT} depicts the functionality of 1-out-of-2 OT, where the sender inputs two messages $(m_0,m_1)$ and the receiver inputs a chosen bit $c$. As a result, the receiver learns $m_c$ without knowing any information about $m_{1-c}$ and the sender learns nothing about $c$. In our MPSI protocol, we introduce a variant of 1-out-of-2 OT, i.e., random OT (ROT)~\cite{beaver1992efficient}, where the sender inputs noting and the receiver inputs a chosen bit $c$, then the sender gets two random string $(m_0,m_1)$ and the receiver gets $m_c$, thus reducing the communication overhead. Fig.~\ref{fig:FROT} shows the functionality of ROT.


\begin{figure}[htb]
    \centering
    \fbox{\begin{minipage}{\linewidth}
            \textbf{Parameters:} Message length $\ell$.

            \textbf{Input:} The sender inputs two message $(m_0,m_1)$ and the receiver inputs a choice bit $c\in \bin$.

            \textbf{Output:} the sender gets nothing and  the receiver gets $m_c$.
        \end{minipage}
    }
    \caption{Ideal functionality for Oblivious Transfer $\mathcal{F}_\mathrm{OT}$ }
    \label{fig:FOT}
\end{figure}

\begin{figure}[htb]
    \centering
    \fbox{\begin{minipage}{\linewidth}
            \textbf{Parameters:} Message length $\ell$.

            \textbf{Input:} The sender inputs nothing and the receiver inputs a choice bit $c\in \bin$.

            \textbf{Output:} the sender gets message $(m_0,m_1)$ where $m_i \in \bin ^\ell$ and the receiver gets $m_c$.
        \end{minipage}
    }
    \caption{Ideal functionality for Random Oblivious Transfer $\mathcal{F}_\mathrm{ROT}$ }
    \label{fig:FROT}
\end{figure}

\subsubsection{Hamming correlation robustness}
The security of our MPSI protocol relies on the \textit{correlation robustness assumption}~\cite{pinkas2019spot,chase2020private}, we extend this assumption to a new definition which we call $t\text{-}party$ correlation robustness assumption to support the security of our protocol.

\begin{definition}[Hamming Correlation Robustness]
    \label{def:hamming}
    A hash function $H$ with input length $n$ is $d\text{-}$hamming correlation robust if for any $a_1,\dots,a_m, b_1,\dots,b_m \in \bin^n$ with $||b_j||_H \geq d$ where $j\in [m]$, the following distribution induced by $s\overset{\$}{\leftarrow} \bin ^n$ is pseudorandom. Namely, for a random function $F$, we have $(H(a_1\xor[b_1\cdot s]), \dots, H(a_m\xor[b_m\cdot s])) \cindist(F(a_1\xor[b_1\cdot s]), \dots, F(a_m\xor[b_m\cdot s]))$, where $\cdot$ denotes bit-wise AND and $\xor$ denotes bit-wise XOR.   
\end{definition}

\begin{corollary}[t-party Hamming Correlation Robustness]\label{col:t-party-hamming-correlation}
    A hash function $H$ with input length $n$ and $t\geq 2$ parties is $t\text{-}party$ $d\text{-}$hamming correlation robust if for any $a_1,\dots,a_m, b^i_1,\dots,b^i_m \in \bin^n$ with $||b^i_j||_H \geq d$ where $j\in [m]$ and $i\in[t-1]$, the following distribution induced by random sampling of $s_i\overset{\$}{\leftarrow} \bin ^n$ where $i\in[t-1]$ is pseudorandom. Namely, for a random function $F$, we have
    \begin{align*}
    \vspace{-10pt}
        (&H(a_1\xor[b^1_1\cdot s_1]\xor[b^2_1\cdot s_2]\dots\xor[b^{t-1}_1\cdot s_{t-1}]),\dots,\\
        &H(a_m\xor[b^1_m\cdot s_1]\xor[b^2_m\cdot s_2]\dots\xor[b^{t-1}_m\cdot s_{t-1}])) \\
        \cindist (&F(a_1\xor[b^1_1\cdot s_1]\xor[b^2_1\cdot s_2]\dots\xor[b^{t-1}_1\cdot s_{t-1}]),\dots,\\
        &F(a_m\xor[b^1_m\cdot s_1]\xor[b^2_m\cdot s_2]\dots\xor[b^{t-1}_m\cdot s_{t-1}])),
        \vspace{-10pt}
    \end{align*}
    where $\cdot$ denotes bit-wise AND and $\xor$ denotes bit-wise XOR.
\end{corollary}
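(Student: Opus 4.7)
The plan is to derive Corollary~\ref{col:t-party-hamming-correlation} from Definition~\ref{def:hamming} by a direct (single-step) reduction rather than a multi-hybrid argument. The key observation is that although the corollary's masked inputs involve $t-1$ independent random strings $s_1,\dots,s_{t-1}$, we only need to treat one of them as the ``live'' randomness that is reduced to the two-party assumption; the remaining $t-2$ strings can be absorbed into a redefined offset without changing the form of the expression.

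Concretely, I would focus on $s_1$ and, for each $j\in[m]$, define the derived offset $a'_j := a_j \xor (b^2_j \cdot s_2) \xor \cdots \xor (b^{t-1}_j \cdot s_{t-1})$. Conditioned on any fixed choice of $s_2,\dots,s_{t-1}$, the LHS entries become $H(a'_j \xor b^1_j \cdot s_1)$, and by the hypothesis $||b^1_j||_H \geq d$, Definition~\ref{def:hamming} applies directly to the tuple $(a'_j, b^1_j)_{j\in[m]}$. It yields that $(H(a'_j \xor b^1_j \cdot s_1))_j \cindist (F(a'_j \xor b^1_j \cdot s_1))_j$, which is precisely the conclusion of the corollary for that fixed setting of the other $s_i$'s. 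Averaging over the independent randomness of $s_2,\dots,s_{t-1}$ preserves indistinguishability, and substituting the definition of $a'_j$ recovers the exact form of the corollary.

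To make this rigorous against a PPT distinguisher $D$ with non-negligible advantage $\epsilon$ on the $t$-party game, I would construct an adversary $D'$ against the two-party assumption as follows: $D'$ samples $s_2,\dots,s_{t-1}\overset{\$}{\leftarrow}\bin^n$ on its own, computes the offsets $a'_j$, and submits $(a'_j, b^1_j)_{j\in[m]}$ to its challenger; when $D'$ receives the challenge tuple $(Y_1,\dots,Y_m)$, it forwards the vector to $D$ and echoes $D$'s guess. A routine case analysis shows that if the challenge is $H$-derived then $D$ sees the LHS distribution, and if $F$-derived then the RHS distribution, so $D'$ inherits advantage $\epsilon$, contradicting Definition~\ref{def:hamming}.

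The main obstacle is spotting the ``one random string suffices'' trick: a first attempt might reach for a $(t-1)$-step hybrid walking through $s_1,\dots,s_{t-1}$ one at a time, which is correct but introduces substantially more bookkeeping (and, naively, a factor-$(t-1)$ loss in the security reduction). Observing that the RHS of the corollary has the \emph{same} argument to $F$ as the LHS has to $H$ means we only have to swap the outer function in one shot; all inner masking terms except one can be simulated by $D'$ itself. A secondary care point is ensuring the random function $F$ is consistent across all $m$ coordinates in both games, which is immediate because Definition~\ref{def:hamming} already fixes a single $F$ for the entire vector, and $D'$ passes along whatever coordinates the challenger produces without reordering or re-randomizing.
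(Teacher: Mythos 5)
Your proposal is correct and is essentially the paper's own argument: the paper likewise absorbs all but one masking term into a derived offset ($c_j = a_j\xor[b^1_j\cdot s_1]\xor\dots\xor[b^{t-2}_j\cdot s_{t-2}]$, keeping $s_{t-1}$ live rather than your $s_1$, a purely cosmetic difference) and invokes Definition~\ref{def:hamming} once. Your explicit conditioning on the fixed $s_i$'s, averaging step, and reduction to a distinguisher make the argument somewhat more rigorous than the paper's terse version, but the underlying idea is identical.
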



\begin{proof}
    Let $a_1,\dots,a_m, b^i_1,\dots,b^i_m \in \bin^n$ where $\forall j\in [m]$ and $\forall i\in[t-1]$, $||b^i_j||_H \geq d$ holds, set $s_i \overset{\$}{\leftarrow} \bin ^n$.
    For $j\in [m]$, calculate $c_j = a_j\xor[b^1_j\cdot s_1]\xor[b^2_j\cdot s_2]\dots\xor[b^{t-2}_j\cdot s_{t-2}]$, then we have $c_1,\dots,c_m, b^{t-1}_1,\dots,b^{t-1}_m\in \bin ^n$ with $||b^{t-1}_j||_\mathrm{H} \geq d$ for $j\in[m]$ and $s_{t-1} \overset{\$}{\leftarrow} \bin ^n$. This satisfies the selection range of each parameter in Definition~\ref{def:hamming}. Therefore, through Definition~\ref{def:hamming}, we have
    \begin{align*}
    \vspace{-10pt}
        &(H(c_1\xor[b^{t-1}_1\cdot s_{t-1}]), \dots, H(c_m\xor[b^{t-1}_m\cdot s_{t-1}])) \\
        \cindist \ &(F(c_1\xor[b^{t-1}_1\cdot s_{t-1}]), \dots, F(c_m\xor[b^{t-1}_m\cdot s_{t-1}])).
    \vspace{-10pt}
    \end{align*}

    By expanding $c_j~(j\in [m])$ we can yield the equivalence relationships that follow
    \begin{align*}
        (&H(a_1\xor[b^1_1\cdot s_1]\xor[b^2_1\cdot s_2]\dots\xor[b^{t-1}_1\cdot s_{t-1}]), \dots, \\
        &H(a_m\xor[b^1_m\cdot s_1]\xor[b^2_m\cdot s_2]\dots\xor[b^{t-1}_m\cdot s_{t-1}])) \\
        \cindist (&F(a_1\xor[b^1_1\cdot s_1]\xor[b^2_1\cdot s_2]\dots\xor[b^{t-1}_1\cdot s_{t-1}]), \dots, \\
        &F(a_m\xor[b^1_m\cdot s_1]\xor[b^2_m\cdot s_2]\dots\xor[b^{t-1}_m\cdot s_{t-1}])),
    \end{align*}
    and this concludes the proof.
\end{proof}

\subsection{Multi-Point OPRF}
Oblivious Pseudorandom Function (OPRF) combines the concepts of OT and pseudorandom function (PRF), in OPRF, the sender inputs nothing and the receiver inputs $y_1,\dots,y_n \in Y$, as a result, the sender obtains a PRF key $k$ and the receiver obtains the PRF values $\mathsf{OPRF}_k(y_1), \dots, \mathsf{OPRF}_k(y_n)$ about its input. The sender learns nothing about the receiver's input $y_1,\dots,y_n$ and the receiver learns nothing about the PRF key $k$. Based on OPRF, multi-point OPRF (MP-OPRF)~\cite{chase2020private} is achieved through the following approaches: The sender with the input set $X$ picks a random bit string $s \overset{\$}{\leftarrow} \bin^w$ of length $w$ and the receiver with the input set $Y$ generates two binary $m\times w$ matrices $A$ and $B$. $A$ is a random matrix, and $B = A \oplus D$ where $D$ embeds the information of the input set $Y$. After running $w$ OTs between parties where the sender acts as the OT receiver, the sender obtains a $m\times w$ matrix $C$ with each column of $C$ being $A_i$ or $B_i$ depending on the chosen bit $s_i$. The PRF key is the matrix $C$ and the OPRF value $\psi$ is calculated as $\psi = H(C_1[v[1]]||\dots||C_w [v[w]])$ where $v = F_k(x)$, then the sender sends $\psi$ it to the receiver. The receiver evaluates the OPRF value as $\phi = H(A_1[v[1]]||\dots||A_w [v[w]])$ where $v = F_k(y)$. Consequently, $\psi = \phi$ indicates $x=y$ as $A_i[v[i]] = B_i[v[i]] = C_i[v[i]]$. Otherwise, from receiver's perspective, the value of $\psi$ appears pseudorandom.

\subsection{Security model}
We adopt a semi-honest security model to assess the security of our proposed MPSI protocol. Fig.~\ref{fig:FMPSI} shows the ideal functionality of our MPSI.

\begin{figure}[htb]
\vspace{-5pt}
    \centering
   \fbox{\begin{minipage}{0.95\linewidth}
\textbf{Parameters:} Party number n and the upper bound of party input set size $N$.

\textbf{Input:} Each party $P_i$ has an input set
$$X_i = \{ x^i_1,\dots,x^i_{n_i} \},$$
where $x^i_j \in \bin ^*$ for $j\in[n_i]$.

\textbf{Output:} Party $P_1$ obtain the intersection $$I = X_1 \cap \dots \cap X_n.$$
\end{minipage}}
\caption{Ideal functionality for MPSI $\mathcal{F}_\mathrm{MPSI}$ }
\label{fig:FMPSI}
\end{figure}

\subsubsection{Adversary model}
In the semi-honest security model, the adversary will adhere to all protocol specifications while attempting to maximize its understanding of the information it observes. The adversary's goal is to gather details regarding other participants' local data or the outcomes of the protocol. It is important to note that the leader is not colluding with any assistants, a standard condition in existing literature \cite{zhang2019efficient,abadi2017efficient}. Under this assumption, our protocol demonstrates resilience against corruption up to $n-1$.

\begin{definition}[Semi-Honest Security]
    Let the view of $P_i$ in protocol $\Pi$ be as $view_i^\Pi (X_1, \dots, X_n)$. Let $\mathcal{F}(X_1, \dots, X_n)$ be the output of $P_1$ in ideal functionality. Let $out^\Pi(X_1, \dots, X_n)$ be the output of $P_1$ in the protocol. The protocol $\Pi$ is semi-honest secure if there exist PPT simulators $\mathcal{S}_1$ and $\mathcal{S}_2$ such that for all inputs $X_1, \dots, X_n$, $\{\mathcal{S}_2(\lambda,X_i),(X_1,\cdots,X_n)\} \overset{c}{\approx} \{ view_i^\Pi (X_1, \dots, X_n),  out^\Pi(X_1, \dots, X_n)\}$, and for $i\in [2,n]$, $\mathcal{S}_1(\lambda, X_1, (X_1, \cdots, X_n)) \overset{c}{\approx} view_1^\Pi (X_1, \dots, X_n)$.
\end{definition}

\section{Construction}\label{sec:construction}
\subsection{Technical overview} 
To construct a novel MPSI protocol with ring topology, we first extend the MP-OPRF protocol to the MP-SOPRF protocol to fit the multi-party setting. As MP-OPRF relies on the basic 1-out-of-2 OT, and in multi-party setting, $P_i$ have to wait $P_1,\dots,P_{i-2}$ to complete before it can interact with $P_{i-1}$. Thus, to minimize the waiting time in the protocol execution, we replace traditional OT with ROT. Through ROT, both parties initiate a set of random bits $r_0, r_1$ based on a random bit $b$, where the sender gets $r_0, r_1$ and the receiver gets $r_b$, and no further interaction is needed. After that, the sender masks its own messages $m_0,m_1$ with $r_0, r_1$ as $r_0\xor m_0, r_1\xor m_1$ and sends the masked values to the receiver. Since the receiver only has $r_b$, it can only reveal one of the two messages $m_b = r_b \xor c_b$. As the initialization process is independent of the messages, all participants are able to execute this process in advance (or in parallel). We use two forms of ROT here. The first form is conducted between $P_1$ and $P_2$, where P1 is the leader. In the subsequent process, $P_1$ chooses a random message $m_0$ as the input of OT. In our approach, $P_1$ directly utilizes the output $r_0$ of ROT as $m_0$, making the communication overhead halved at the beginning of the protocol ($P_1$ only needs to send $c_1$ to $P_2$). On the other hand, $P_2, \dots , P_n$ just execute the normal ROT protocol.

\subsection{Construction of MP-SOPRF}
We regard multi-point sequential oblivious pseudorandom function (MP-SOPRF) as a special MP-OPRF protocol for multi-party scenarios. Let $X_i$ denote the data set of party $P_i$ where $i\in [n-1]$. The input of MP-SOPRF is no longer a single party's data set, but multiple parties' data sets, i.e., $\{X_1,\dots, X_{n-1}\}$. After MP-SOPRF, $P_1$ receives the OPRF values corresponding to the data in $X_1$ and $P_n$ receives the OPRF key. Let $I = \{X_1\cap\dots\cap X_{n-1}\}$, each value obtained by $P_1$ is meaningful only when $x \in I $, otherwise it is a random string. Note that $P_1$ cannot distinguish whether the obtained values are meaningful values or random values. Assuming matrix $C$ is a PRF key. The pseudorandom function can be computed as $\mathsf{OPRF}_{C}(x) = H(C_1[v[1]]||\dots||C_w[v[w]])$. Finally, $P_1$ obtains a matrix $A$ and computes the output values by $\mathsf{OPRF}_A(x)$. On the other hand, $P_n$ obtains the PRF key $C$. For $x \in I$, we have $ \mathsf{OPRF}_A(x) = \mathsf{OPRF}_C(x)$, otherwise if $x \in X_1\setminus I$, $\mathsf{OPRF}_C(x)$ should be a random string in $P_1$'s view. The functionality of MP-SOPRF is defined in Fig.~\ref{fig:mp-soprf}.

\begin{figure}[htb]
    \centering
   \fbox{\begin{minipage}{0.95\linewidth}

\textbf{Input:} Each party $P_i$ has an input set $X_i$ for $i\in[n-1]$.

\textbf{Output:} Party $P_n$ obtain the MP-OPRF key $k$. Party $P_1$ obtain the MP-OPRF value set $V(x)$ for each $x$ in his input set $X_1$. If and only if $x\in \{X_1\cap\dots\cap X_{n-1}\}$, there is $\mathsf{OPRF}_k(x) = V(x)$
\end{minipage}}
\caption{Ideal functionality for MP-SOPRF $\mathcal{F}_\mathsf{MP\text{-}SOPRF}$}
\label{fig:mp-soprf}
\end{figure}

In MP-SOPRF, the intersection determination method is reusable, and regardless of how the random seed of the OT receiver is chosen, if the data is in the intersection, the value at position $M_0$ of the OT sender's inputs ($M_0$ and $M_1$) will be obtained, because when designing the OT input matrices $M_0$ and $M_1$, the values at corresponding positions in $M_0$ and $M_1$ for data mapped from the OT sender are the same. This ensures that in the OT results obtained by the OT receiver, as long as the data used belongs to the OT sender, the computed results will match the values in $M_0$ at the corresponding positions. 

Thus, when identifying intersections, the OT sender (who acts as the PSI receiver in PSI protocols) only needs to compute and judge through its self-selected random number. This approach is also feasible in a multi-party structure. Only when the data is in the intersection of all parties, regardless of how any party's OT random seed is selected,  the outcome of the entire OT link will be $m_0$. Finally, the leader (also referred to as the receiver in two-party PSI) only needs to match its generated random matrix $A$ with the matrix $C$ sent by $P_n$ during the final intersection matching. When the leader's data $x$ is in the intersection of all parties will occur $A(x) = C(x)$.

\begin{figure*}[htbp]
\centering
    \includegraphics[width=\linewidth]{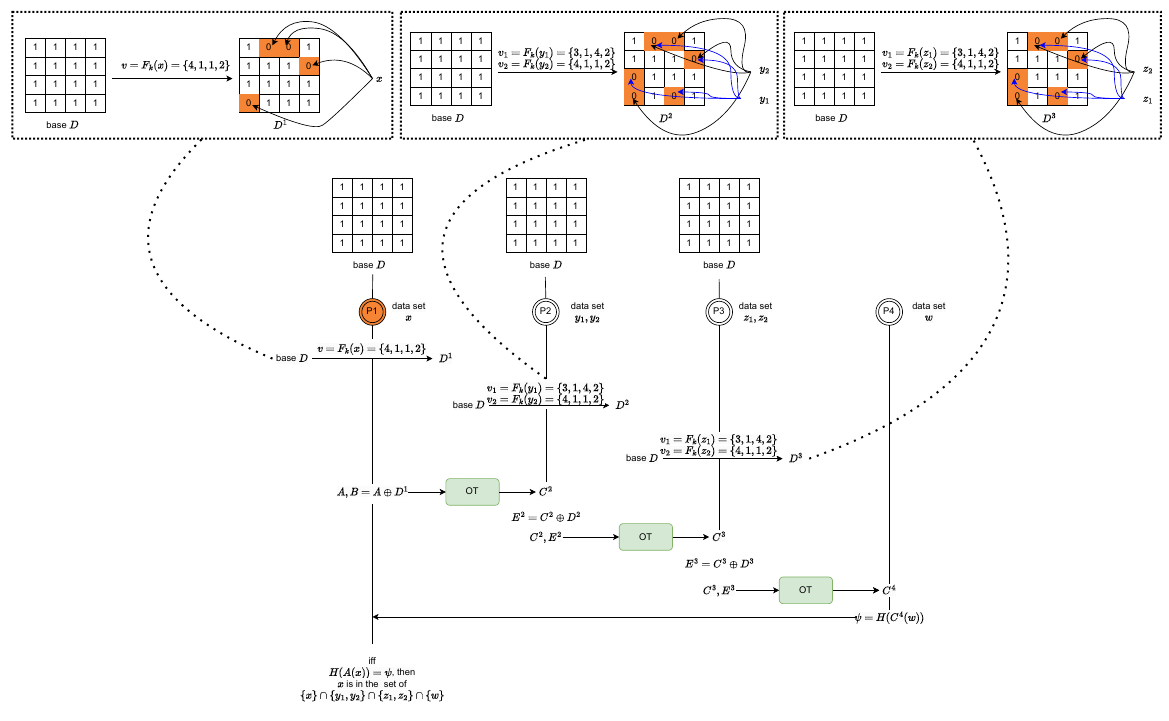}
    \caption{Overview of our MPSI protocol}
    \label{fig:framework}
\end{figure*}

\subsection{Construction of our MPSI protocol}
We first give a highlight overview of our MPSI protocol consisting of 4 parties, as is shown in Fig.~\ref{fig:framework}, $P_1$ is the leader with data set $x$. $P_2,P_3,P_4$ are assistants and the data sets are $\{ y_1, y_2\}, \{ z_1, z_2\},\{ w\} $.  $P_1$ prepare an all-ones binary matrix $D$ and compute a vector $v=F_k(x)$. Subsequently, based on the vector $v$, the corresponding elements in $D$ are set to 0 and the result is denoted as $D^1$. $P_2$ and $P_3$ generate matrices $D^2$ and $D^3$ using the method described above based on their own datasets. It should be noted that if a corresponding position in the matrix is already 0, no further action is needed. Firstly, $P_1$ randomly generates a binary matrix $A$ and runs OT with $P_2$, where the OT input is $A, B = A \oplus D^1$. Let $C^2$ be the OT result obtained by $P_2$ and then $P_2$ runs OT with $P_3$, where the OT input is $C^2, E^2 = C^2 \oplus D^2$ and the OT output is $C^3$ . Following this approach, $P3$ and $P_4$ continue to run OT, with the OT input being $C^3, E^3 = C^3 \oplus D^3$ and the OT output being $C^4$. Afteward, $P_4$ computes the vector $\psi$ using $C^4$ and its own data set $\{w\}$, and sends it to $P_1$. Finally $P_1$ computes $H(A(x))$ using $A$  and its own data set $\{x\}$ and outputs $x$ is in the intersection if $H(A(x)) = \psi$. Regardless of how the random seeds for the OT are selected by the participating parties, there are the following relationships between the matrices. $A(x) = C^2(y)$ if and only if $x = y$. $C^2(y) = C^3(z)$ if and only if $y = z$. $C^3(y) = C^4(w)$ if and only if $z = w$. Therefore, $A(x) = C^4(w)$ if and only if $x = y = z = w$ which means that $x$ is in the set of $\{ x\} \cap \{ y_1, y_2\}\cap\{ z_1, z_2\}\cap\{ w\}$.

Generally, a multiparty PSI protocol involves $n$ participants $P_1,\dots ,P_n$, each having a dataset set $X_i$. They aim to compute the intersection of theses datasets $X_1\cap \dots \cap X_n$ without revealing any information beyond the common intersection among the parties. We consider $P_1$ as the leader party that gets the intersected set. we describe the flow of our protocol in Figure~\ref{fig:construction}. In Step \ref{OT1} and \ref{OT2}, involving the execution of OT protocols between two parties, to enhance efficiency, we can instantiate it using random OT. Specific procedures are described in Figures \ref{fig:ROT1} and \ref{fig:ROT2}, respectively. 

\begin{figure*}[htb]
\centering
\fbox{
\begin{minipage}{0.98\linewidth}
\textbf{Parameters:}  Parties $P_1$, $P_2$,$\dots$, and $P_n$ agree on security parameters $\lambda$ and $\sigma$, two hash functions $H_1 : \{\mathrm{0,1}\}^* \rightarrow \{\mathrm{0,1}\}^{\ell_1}$ and $H_2 : \{\mathrm{0,1}\}^\omega \rightarrow \{\mathrm{0,1}\}^{\ell_2}$, pseudorandom function $F : \{\mathrm{0,1}\}^\lambda \times \{\mathrm{0,1}\}^{\ell_1} \rightarrow [m]^\omega$ where $\ell_1$, $\ell_2$, $m$, and $\omega$ are protocol parameters common to all parties.

\textbf{Preprocessing Stage:}
\begin{enumerate}
    \item Each party $P_i$ samples a random string $ s_i \sample \bin ^\omega$ as the OT receiver inputs for $i\in \{2,3,\dots,n\}$.
    \item All participating parties share the PRF key $k \sample \bin ^\lambda$.
    \item Each party $P_i$ generates an $m\times \omega$ matrix $D^i$ with all elements equal to 1 for all $i\in [n]$.
    \item For each $x\in X_i$, party $P_i$ computes $v = F_k(H_1(x))$ and set $D_j^i[v[j]] = 0$ for all $j\in[\omega]$ where $i\in [n-1]$. 
\end{enumerate}

\textbf{Wheeled Oblivious Transfer Interaction:}
\begin{enumerate}[resume]
    \item $P_1$ samples a random binary matrix $ A \overset{\$}{\leftarrow} \bin ^{m\times\omega}$. Computes matrix $B = A\oplus D^1$.
    \item\label{OT1}  $P_1$ as the OT sender with input $\{A_j, B_j\}_{j\in\omega}$ and $P_2$ as the OT receiver with input $s_2$ run OT $\omega$ times. After all OTs are completed, $P_2$ obtains an ${m\times\omega}$ matrix $C^2$ whose column vectors is a $m$-bit OT result string.
    \item $P_2$ computes $E^2 = C^2 \oplus D^2$.
    \item\label{OT2} Next each party $P_i$ runs wheeled OT interaction  for $i\in\{2,3,\dots,n-1\}$, which means that party $P_i$ as the OT sender runs OT with the next party $P_{i+1}$ as described in step 6. The OT result matrix for party $P_{i+1}$ is denoted as $C^{i+1}$. When party $P_i$ is the OT sender, his input is $\{C^i, E^i = C^i \oplus D^i \}$ where matrix $C^i$ is the outcome of OT conducted between $P_i$ as the OT receiver and $P_{i-1}$. 

\end{enumerate}

\textbf{Intersection Result Calculation:}
\begin{enumerate}[resume]
    \item  \label{send psi} After Wheeled OT, party $P_n$ will get a matrix $C^n$, then $P_n$ computes $v = F_k(H_1(x))$ for each $x\in X_n$ and its OPRF value $\psi = H_2(C^n_1[v[1]]||\dots||C^n_\omega [v[\omega]])$ and sends $\psi$ to $P_1$.
    \item \label{h2} The set of elements received by $P_1$ from $P_n$ is denoted as $\Psi$. $P_1$ computes $v = F_k(H_1(x))$ for each $x\in X_1$ and his own OPRF value as $\phi = H_2(A_1[v[1]]||\dots||A_\omega [v[\omega]])$. If $\phi \in \Psi$, $P_1$ puts $x$ into the intersection set $I$.
    \item Party $P_1$ output $I$ as the intersection of all parties. 
\end{enumerate}
 \end{minipage}}
    \caption{Our multiparty private set intersection protocol}
    \label{fig:construction}
\end{figure*}

\begin{figure}
    \centering
\fbox{
\begin{minipage}{0.95\linewidth}
    \begin{enumerate}
    \item Party $P_1$ and $P_2$ perform $\omega$ instances random OTs  with message length $m$ where $P_2$ is the OT receiver with input bits $s_2\in \bin^\omega$. At the end of random OT, $P_1$ obtains $\omega$ message pairs $\{r_j^0, r_j^1\}_{j\in[\omega]}$. $P_2$ obtains $\omega$ random message $\{r_j\}_{j\in[\omega]}$ where $r_j = r_j^{s_2[j]}$.
    \item $P_1$ constructs matrix $A$, set the column vectors of $A$ as $\{r_j^0\}_{j\in[\omega]}$, then compute matrix $B = A \oplus D^1$. Then it calculates matrix $\Delta_j = B_j \oplus r_j^1$ for all $j\in[\omega]$ and send $\Delta = \Delta_1 || \dots || \Delta_\omega$ to $P_2$.
    \item $P_2$ constructs matrix $C^2$ as:  
    \begin{align*} 
        \mathrm{for} \  j\in [\omega],\ C^2_j =  \left\{\begin{array}{lr}  
             r_j, \quad\quad\quad \ s_2[j] = 0,\\
             r_j \oplus \Delta_j, \quad \mathrm{otherwise.}
    \end{array} \right.
    \end{align*}
    \end{enumerate}
\end{minipage}
}
    \caption{The instantiated using random OT in Step \ref{OT1}}
    \label{fig:ROT1}
\end{figure}

\begin{figure}
    \centering
    \fbox{
    \begin{minipage}{0.95\linewidth}
    \begin{enumerate}
    \item $P_i$ and $P_{i+1}$ perform $\omega$ instances ROTs with message length $m$ where $P_{i+1}$ is the OT receiver with input bits $s_{i+1}\in \bin^\omega$. At the end of random OT, $P_i$ obtains $\omega$ message pairs $\{r_j^0, r_j^1\}_{j\in[\omega]}$. $P_{i+1}$ obtains $\omega$ random message $\{r_j\}_{j\in[\omega]}$ where $r_j = r_j^{s_{i+1}[j]}$.
    \item $P_i$ does following steps:
    \begin{enumerate}
        \item Construct matrix $R_0$ which column vectors are $\{r_j^0\}_{j\in[\omega]}$ and matrix $R_1$ which column vectors are $\{r_j^1\}_{j\in[\omega]}$. Then compute matrix $E^i = C^i \oplus D^i$.
        \item Compute matrix $\Gamma = R_0 \oplus C^i$, $\Delta = R_1 \oplus E^i$ and send $\Delta$, $\Gamma$ to $P_{i+1}$.
    \end{enumerate}
    \item $P_{i+1}$ constructs matrix $C^{i+1}$ as:
    \begin{align*}
   \text{for}\ j \in [\omega], C^{i+1}_j =  \left\{\begin{array}{cc}  
             r_j \oplus \Gamma_j   &  \mathrm{if} \quad s_{i+1}[j] = 0\\
             r_j \oplus \Delta_j & \mathrm{otherwise}       
    \end{array} \right.
    \end{align*}
 \end{enumerate}
 \end{minipage}
}
    \caption{The instantiated using random OT in Step \ref{OT2}}
    \label{fig:ROT2}
\end{figure}

\subsection{Correctness}
At a high level, for each $x\in X_1$, let $v = F_k(H_1(x)$, we can get $A_j[v[j]] = B_j[v[j]]$ for all $j\in[\omega]$. After OT between $P_1$ and $P_2$, regardless of what $P_2$ choose for $s_2$, $A_j[v[j]] = B_j[v[j]] = C^2_j[v[j]]$ for all $j\in[\omega]$ iff $x\in X_1 \cap X_2$. 
Through a similar analysis, we can conclude that after OT between $P_i$ and $P_{i+1}$, let $v = F_k(H_1(x)$ regardless of what $P_{i+1}$ choose for $s_{i+1}$, $C^i_j[v[j]] = E^i_j[v[j]] = C^{i+1}_j[v[j]]$ for all $j\in[\omega]$ iff $x\in X_{i} \cap X_{i + 1}$ for all $i\in \{2,3,\dots,n-1\}$. At the end of the entire protocol flow, $P_n$ will get matrix $C^n$. By transitivity, let $v = F_k(H_1(x)$ we can deduce $C^n_j[v[j]] = E^{n-1}_j[v[j]] = C^{n-1}_j[v[j]] = \dots = E^{2}_j[v[j]] = C^{2}_j[v[j]] = B_j[v[j]] = A_j[v[j]]$ for all $j\in[\omega]$ iff $x \in X_1\cap \dots \cap X_n$, which means that $C^n_j[v[j]] = A_j[v[j]]$. Then for each $x\in X_1$, We can conclude $x\in X_1\cap \dots \cap X_n$ iff $\phi \in \Psi$.

\section{Security analysis}
As we assume the leader will not collude with any assistant, we consider the security of our MPSI protocol under two scenarios: 1) only the leader is corrupted, and 2) the assistants are corrupted (i.e., the leader is not corrupted). Our security analysis relies on the following two lemmas~\cite{chase2020private}.
\begin{lemma}
    If $F$ is a pseudorandom function, $H_1(x)$ is different for each $x \in X_1\cup \dots \cup X_n$, the probability that the number of ``1''s in the sequence $D^i_1[v[1]],\dots,D^i_w[v[w]]$ does not exceed $d$ (the parameter of Hamming correlation robustness defined in Corollary~\ref{col:t-party-hamming-correlation}) is negligible.
    \label{lem:1}
\end{lemma}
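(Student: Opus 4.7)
The plan is to reduce to the ideal case where $F_k$ is replaced by a truly random function, and then to show that the sequence $(D^i_1[v[1]],\dots,D^i_w[v[w]])$ is a sum of independent Bernoulli variables whose expectation comfortably exceeds $d$, so that a Chernoff bound drives the failure probability below any $\negl[\lambda]$. Note that the statement is only nontrivial when $x \notin X_i$: if $x \in X_i$ then by construction every coordinate is $0$, which actually is a disjoint ``legitimate'' case that we treat separately (or exclude from the event of interest in the security proof that invokes this lemma).

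First, by PRF security I would switch from $F_k$ to a random function $R$, paying only a $\negl[\lambda]$ distinguishing term. Because $H_1$ is assumed to be injective on $X_1\cup\dots\cup X_n$, every invocation $R(H_1(\cdot))$ sees a fresh input, so the outputs $\{R(H_1(x'))\}$ are mutually independent and uniform in $[m]^w$. In this idealized world I fix an arbitrary $x \in X_1$ with $H_1(x)\neq H_1(x')$ for all $x'\in X_i$, and write $v := R(H_1(x))$; the coordinates $v[1],\dots,v[w]$ are i.i.d.\ uniform in $[m]$ and independent from the random set $S_j := \{R(H_1(x'))[j] : x'\in X_i\}$ that determines the $0$-positions of column $j$ of $D^i$.

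Second, I would compute $\Pr[D^i_j[v[j]]=1] = 1 - |S_j|/m \geq 1 - |X_i|/m \geq 1 - N/m$, where $N$ is the upper bound on each input set size. Let $B_j := \mathbf{1}[D^i_j[v[j]]=1]$. The variables $B_1,\dots,B_w$ are mutually independent across columns because both $v[j]$ and $S_j$ are determined by disjoint coordinates of independent random values. Hence $W := \sum_{j=1}^{w} B_j$ is a sum of independent Bernoullis with $\mathbb{E}[W] \geq w(1 - N/m)$.

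Third, I invoke a multiplicative Chernoff bound to conclude that $\Pr[W \leq d] \leq \exp(-\Omega(w))$ whenever the protocol parameters are chosen so that $d < w(1-N/m)$ by a constant factor; for the standard choices $m = \Theta(N)$ and $w = \Omega(\sigma)$ this gives a $2^{-\Omega(\sigma)}$ bound. A union bound over the at most $|X_1|$ queries $x\in X_1$ on which we need the guarantee, together with the PRF-replacement loss, yields an overall $\negl[\lambda]$ bound, completing the argument. The main obstacle is not the probabilistic calculation itself but the careful alignment of the parameters $(m,w,d,N)$ so that the Chernoff slack is large enough to absorb both the union bound and the Hamming threshold $d$ required by Corollary~\ref{col:t-party-hamming-correlation}; this is a parameter-tuning exercise that must be done consistently with the rest of the security reduction.
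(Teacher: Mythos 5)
Your proposal follows essentially the same route the paper takes (the paper does not give a standalone proof of Lemma~1 — it cites Chase--Miao — but its ``Choice of $m,w$'' analysis in Section~VI-A is exactly this argument): replace $F$ by a random function, use injectivity of $H_1$ to get fresh independent outputs, observe that the $w$ indicator variables $D^i_j[v[j]]$ are independent Bernoullis, and kill the lower tail of their sum with a union bound over the at most $N$ relevant elements. The only substantive difference is cosmetic: the paper bounds the tail by the exact binomial sum $N\cdot\sum_{k=0}^{d-1}\binom{w}{k}p^k(1-p)^{w-k}\le\mathsf{negl}(\sigma)$ and uses this inequality to \emph{select} $w$, whereas you invoke a multiplicative Chernoff bound; both are fine, and the paper's version doubles as the concrete parameter-setting recipe.

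One quantitative step in your write-up does fail for the paper's actual parameters. You lower-bound the per-column success probability by $\Pr[D^i_j[v[j]]=1]\ge 1-|X_i|/m\ge 1-N/m$, but the paper sets $m=N$, for which this bound is $0$ and your Chernoff expectation $w(1-N/m)$ collapses. The union bound over the elements of $X_i$ is too lossy here: since each $x'\in X_i$ maps to an independent uniform position, the correct value is $p=\Pr[D^i_j[v[j]]=1]=(1-\tfrac{1}{m})^{n_i}\approx e^{-N/m}$, which is the constant $\approx e^{-1}\approx 0.368$ at $m=N$ — this is the quantity the paper works with. Substituting $p$ for $1-N/m$ repairs your argument with no other changes, so the gap is a local miscalculation rather than a structural flaw.
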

\begin{lemma}
    If $H_2$ satisfies $t\text{-}$party Hamming correlation robustness, for $x\in X_1,  y \in X_n \backslash I$, $x\neq y$, $v=F_k(H_1(x))$, $u=F_k(H_1(y))$, the probability that $H_2(A_1[v[1]]||\dots||A_w[v[w]]) = H_2(C^n_1[u[1]]||\dots||C^n_w[u[w])$ holds is negligible.
    \label{lem:2}
\end{lemma}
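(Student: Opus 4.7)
The plan is to invoke the $t$-party Hamming correlation robustness of $H_2$ to argue that $\psi := H_2(C^n_1[u[1]]\|\dots\|C^n_\omega[u[\omega]])$ is computationally indistinguishable from a uniformly random $\ell_2$-bit string, so that its coincidence with the fixed quantity $\phi := H_2(A_1[v[1]]\|\dots\|A_\omega[v[\omega]])$ happens with probability at most $2^{-\ell_2} + \negl[\lambda]$.

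First, I would unroll the wheeled OT chain. By induction on the round, the $j$-th column of the final matrix satisfies $C^n_j = A_j \oplus \bigoplus_{i=1}^{n-1}(s_{i+1}[j] \cdot D^i_j)$, where $s_{i+1}[j] \cdot D^i_j$ is the column $D^i_j$ selected by the bit $s_{i+1}[j]$. Reading off row $u[j]$ and defining the $\omega$-bit vectors $a' := (A_j[u[j]])_{j\in[\omega]}$ and $\delta^i := (D^i_j[u[j]])_{j\in[\omega]}$, the input to $H_2$ inside $\psi$ reduces to
\[
b \;:=\; a' \;\oplus\; \bigoplus_{i=1}^{n-1}(\delta^i \cdot s_{i+1}),
\]
where $\cdot$ is the bit-wise AND of two $\omega$-bit strings.

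Next, I would exploit the structure of $\delta^i$. Whenever $y \in X_i$, the preprocessing forces $D^i_j[u[j]] = 0$ for every $j$, so $\delta^i = 0^\omega$ and the corresponding summand vanishes. Let $S := \{i \in [1,n-1] : y \notin X_i\}$; since $y \in X_n \setminus I$, $|S| \ge 1$. Applying Lemma~\ref{lem:1} to $y$ for each $i \in S$ gives $\|\delta^i\|_H \ge d$ except with negligible probability in $\lambda$; call this the good event. Conditioning on $A$, $k$, and the good event fixes $\phi$, $a'$, and every $\delta^i$, leaving $\{s_{i+1}\}_{i\in S}$ uniform and independent in $\bin^\omega$. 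Corollary~\ref{col:t-party-hamming-correlation} with $m=1$ and $|S|+1 \ge 2$ parties then asserts that $H_2(b)$ is computationally indistinguishable from $F(b)$, a uniformly random $\ell_2$-bit string. A direct bound therefore yields $\Pr[\phi = \psi] \le 2^{-\ell_2} + \negl[\lambda]$.

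The main obstacle is handling the fact that the corollary's premise $\|\delta^i\|_H \ge d$ fails for every $i \notin S$, where $\delta^i$ is identically zero. The resolution is to observe that these zero summands drop out of the XOR, so the robustness need only be invoked over the secrets indexed by $S$, for which Lemma~\ref{lem:1} supplies the required Hamming weight. A secondary subtlety is verifying that $\phi$ does not leak information about the secrets $\{s_{i+1}\}_{i \in S}$ used in the reduction; this holds because $\phi$ depends only on $A$, $k$, and $x$, all of which are fixed by the conditioning before the correlation robustness assumption is applied.
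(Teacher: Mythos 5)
Your proof is correct and follows essentially the same route the paper takes: the paper does not prove Lemma~\ref{lem:2} separately (it defers to \cite{chase2020private}), but the reasoning it embeds in $\mathbf{Hybrid_4}$/$\mathbf{Hybrid_5}$ is exactly your unrolling $C^n_1[u[1]]\|\dots\|C^n_w[u[w]] = a' \oplus [\delta^1\cdot s_2]\oplus\dots\oplus[\delta^{n-1}\cdot s_n]$ followed by an appeal to Corollary~\ref{col:t-party-hamming-correlation} and a $2^{-\ell_2}$ collision bound. Your restriction to $S=\{i: y\notin X_i\}$ is in fact more careful than the paper's own phrasing, which demands at least $d$ ones in $D^i_1[u[1]],\dots,D^i_w[u[w]]$ for every $i\in[n-1]$ even though those entries are identically zero whenever $y\in X_i$; your observation that the zero summands simply drop out of the XOR is the right fix.
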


\begin{theorem}
    If $F$ is a pseudorandom function, $H_1$ is a collision resistant hash function, and $H_2$ is a $t\text{-}$party $d\text{-}$Hamming correlation robust hash function, then our MPSI protocol securely realizes $\mathcal{F}_{\mathrm{MPSI}}$ (Figure \ref{fig:FMPSI}) in the semi-honest mode. 
\end{theorem}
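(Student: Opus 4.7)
The plan is to follow the real-versus-ideal simulation paradigm, constructing two PPT simulators $\mathcal{S}_1$ (for a corrupted leader) and $\mathcal{S}_2$ (for an arbitrary coalition of corrupted assistants); the standing non-collusion assumption between leader and assistants makes these cases disjoint. Pseudorandomness of $F$ renders each $v = F_k(H_1(x))$ uniform in $[m]^\omega$, collision resistance of $H_1$ rules out spurious intersection hits, the ROT protocols of Figures~\ref{fig:ROT1} and \ref{fig:ROT2} are simulation-secure in the ROT-hybrid model, and Lemma~\ref{lem:1} together with Lemma~\ref{lem:2} encapsulate the $t$-party $d$-Hamming correlation robustness that underlies the masking structure propagated along the ring.

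For the corrupted-leader case, $\mathcal{S}_1$ is given $X_1$ and the true intersection $I$. It samples $A$ exactly as the real protocol does, then synthesises the set $\Psi$ that $P_1$ would receive from $P_n$: for each $x \in I$ it inserts the genuine value $H_2(A_1[v[1]]||\cdots||A_\omega[v[\omega]])$ with $v = F_k(H_1(x))$; for the remaining $|X_n|-|I|$ positions it samples independent uniform strings of length $\ell_2$. The real $\psi$-values for $y \in X_n \setminus I$ are of the form $H_2(C^n_1[u[1]]||\cdots||C^n_\omega[u[\omega]])$ in which $C^n$ differs from $A$ at the relevant bit positions by the cumulative XOR of honest parties' random choices along columns whose Hamming weights are at least $d$ (guaranteed by Lemma~\ref{lem:1}), so Lemma~\ref{lem:2} supplies the pseudorandomness needed for indistinguishability between the real and simulated $\Psi$.

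For the corrupted-assistant case, let $T \subsetneq \{2,\dots,n\}$ be the corrupt coalition and $\mathcal{S}_2$ receive $\{X_i\}_{i \in T}$; crucially, no assistant ever receives the intersection, so there is no ideal-world output to embed. The simulator plays every honest party along the ring by invoking the ROT functionality to obtain each corrupt receiver's uniform output $C^i$, then fills the correction messages $(\Gamma, \Delta)$ sent toward any corrupted successor with random strings matching the distribution induced by fresh $A$ or $(R_0, R_1)$ in the real protocol. Pairwise indistinguishability follows from ROT security, and standard composition across the ring yields joint-view indistinguishability.

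The main obstacle is handling coalitions whose members are non-contiguous on the ring: in that configuration, a corrupt predecessor and a corrupt successor of the same honest segment share correlated information through the intermediate $C^i$ matrices, and one must argue that these are jointly indistinguishable from independent uniform strings even though they depend on the same honest parties' secret bits. This is precisely the motivation for Corollary~\ref{col:t-party-hamming-correlation}: applying the $t$-party robustness to the cumulative XOR over all honest parties' random strings, along column indices of Hamming weight at least $d$ (supplied by Lemma~\ref{lem:1}), yields the required pseudorandomness of the final OPRF output $C^n$ in a single shot, after which the hybrids chain cleanly and close the reduction.
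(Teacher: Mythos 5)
Your overall strategy coincides with the paper's: two disjoint corruption cases justified by the leader/assistant non-collusion assumption, a simulator for the corrupted leader that fills $\Psi$ with the genuine values $H_2(A_1[v[1]]||\dots||A_\omega[v[\omega]])$ for $x\in I$ padded with uniform $\ell_2$-bit strings, and a simulator for corrupted assistants that replaces each received $C^i$ by a uniform matrix and invokes the OT/ROT simulator. The hybrid chain and the roles of Lemma~\ref{lem:1} and Lemma~\ref{lem:2} are the same, and your corrupted-leader half is essentially identical to the paper's proof.

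The step that does not go through as written is your resolution of the non-contiguous-coalition obstacle via Corollary~\ref{col:t-party-hamming-correlation}. That corollary asserts pseudorandomness only of the \emph{$H_2$-hashed} values $H_2(a\oplus[b^1\cdot s_1]\oplus\cdots)$; it says nothing about the raw strings $a\oplus[b^1\cdot s_1]\oplus\cdots$ themselves. A corrupted assistant's view, however, contains the unhashed matrices $C^i$ and the correction strings $(\Gamma,\Delta)$ --- never an $H_2$ output --- so the correlation robustness of $H_2$ cannot be what makes those matrices jointly look uniform. Concretely, for the coalition $\{P_2,P_4\}$ with $P_3$ honest, one has $C^4_j = C^2_j\oplus[s_3[j]\cdot D^2_j]\oplus[s_4[j]\cdot D^3_j]$, in which everything except $s_3[j]$ and $D^3_j$ is known to the coalition; this column is far from uniform conditioned on the coalition's view, and no hash is applied at this point that could absorb the correlation. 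So ``the hybrids chain cleanly'' is exactly the claim that still needs an argument here. The paper takes a different (and also terse) route for this case: its $\mathbf{Hybrid_1}$ resamples $A_j$ or $B_j$ uniformly according to $s_2[j]$, exploiting that the honest leader's $A$ is secret, and its $\mathbf{Hybrid_7}$ then appeals to OT security to replace the $C^i$ by random matrices; it never invokes Corollary~\ref{col:t-party-hamming-correlation} for the assistants' views, only for what is sent to $P_1$. If you want to cover coalitions that straddle an honest assistant, you must reason directly about the joint distribution of $(C^i, E^i)$ across the honest gap rather than about OPRF outputs; your appeal to the $t$-party robustness of $H_2$ does not supply that.
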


\begin{proof}[$P_1$ is not corrupted]
We construct a probabilistic polynomial time (PPT) simulator $\mathcal{S}_Z$ where $Z$ is a subset of which the parties in $P_2,\dots,P_n$ are corrupted by the adversary. Given $\{ X_i \}_{i\in Z}$, $\mathcal{S}_Z$ can generate simulated views, and these simulated views are computationally indistinguishable from the views of the joint distribution of corrupted parties in the real protocol. $\mathcal{S}_Z$ generates the random string $\{s_i \overset{\$}{\leftarrow} \bin^w\}_{i\in Z}$ honestly and chooses the random matrix $\{C^i\}_{i\in Z} \in \bin ^{m\times w}$. It runs the OT simulator to simulate the view of the OT receiver for the corrupted party $P_i\in Z$ with inputs $s_i[1],\dots,s_i[w]$ and outputs $C_1^i,\dots,C_w^i$. Furthermore, $\mathcal{S}_Z$ sends a uniform random PRF key $k$ to the corrupted parties. Finally $\mathcal{S}_Z$ outputs the view of corrupted parties. We prove that $\{ \sdv_Z(\lambda , \{X_i\}_{i\in Z}), \mathcal{F}(X_1, \dots, X_n)\}  \cindist \{ view_Z^\Pi (X_1, \dots, X_n), out^\Pi(X_1,\dots,X_n) \}$ via a sequence of computationally indistinguishable hybrid arguments.

\noindent$\mathbf{Hybrid_0:}$ The view of corrupted parties $\{P_i\}_{i\in Z}$ and the output of $P_1$ in the real protocol.

\noindent$\mathbf{Hybrid_1:}$ Same as $\mathbf{Hybrid_0}$ except that for party $P_2$, for each $j\in[w]$, if $s_2[j] = 0$, $P_1$ samples a random column $A_j\overset{\$}{\leftarrow}\bin^m$ and computes $B_j = A_j \oplus D^1_j$; otherwise it samples a random column $B_j\overset{\$}{\leftarrow}\bin^m$ and computes $A_j = B_j \oplus D^1_j$. This hybrid is identical to $\mathbf{Hybrid_0}$.

\noindent$\mathbf{Hybrid_2:}$ Same as $\mathbf{Hybrid_1}$ except that $\sdv_Z$ chooses a random PRF key $k$. This is statistically identical to $\mathbf{Hybrid_1}$.

\noindent$\mathbf{Hybrid_3:}$ Same as $\mathbf{Hybrid_2}$ except that the protocol aborts if there exist $x\neq y\in X_1\cup\dots\cup X_n$ such that $H_1(x) = H_1(y)$. This hybrid is identical to $\mathbf{Hybrid_2}$ except negligible probability because $H_1$ is collision resistant and then the aborting probability is negligible.

\noindent$\mathbf{Hybrid_4:}$ Same as $\mathbf{Hybrid_3}$ except that the protocol aborts if there exist $x\in X_n \backslash I$, for $v = F_k(H_1(x))$, there are fewer than $d$ 1's in $D^i_1[v[1]],\dots,D^i_w[v[w]]$ where $i\in[n-1]$. The choice of parameter $m,w$ ensure that if $F$ is a random function and $H_1(x)$ is different for each $x \in X_1\cup \dots \cup X_n$, the probability of abort is negligible by Lemma \ref{lem:1}. 

\noindent$\mathbf{Hybrid_5:}$ Same as $\mathbf{Hybrid_4}$ except that the output of $P_1$ is replaced by $\mathcal{F}(X_1,\dots,X_n) = I = X_1\cap\dots\cap X_n$. This hybrid changes the output of $P_1$ iff there are $x\in X_1,  y \in X_n \setminus I$, $x\neq y$, such that $v=F_k(H_1(x))$, $u=F_k(H_1(y))$, $H_2(A_1[v[1]]||\dots||A_w[v[w]]) = H_2(C^n_1[u[1]]||\dots||C^n_w[u[w]).$ Since $H_2(C^n_1[u[1]]||\dots||C^n_w[u[w])$ is pseudorandom by the $t$ party $d$-Hamming correlation robustness of $H_2$, this hybrid is identical to $\mathbf{Hybrid_4}$ except for a negligible probability for sufficiently large $\ell_2$ by Lemma \ref{lem:2}.

\noindent$\mathbf{Hybrid_6:}$ Same as $\mathbf{Hybrid_5}$ except that the protocol does not abort. From the above discussion, it is evident that the probability of a protocol abort is negligible.

\noindent$\mathbf{Hybrid_7:}$ Same as $\mathbf{Hybrid_6}$ except that for corrupted parties $\{ P_i\}_{i\in Z}$, $\sdv_Z$ samples the matrixes $\{C^i\}_{i\in Z}$ and runs the OT simulator to simulate the view of an OT receiver for $P_i$. This hybrid is computationally indistinguishable from $\mathbf{Hybrid_6}$ by the security of OT protocol.   
\end{proof}

\begin{proof}[$P_1$ is corrupted]
We construct a PPT simulator $\mathcal{S}_1$ as follows. $\mathcal{S}_1$ generate simulated views wiht $P_1$'s input set $X_1$, the max input set size $N$ (Here we can also input the size $n_n$ of $P_n$'s input set, as in traditional MPSI, the size of each party's sets are not information requiring protection and do not affect the correctness of our proof or the security of the protocol) and the intersection $I = \mathcal{F}(X_1,\dots, X_n )$, and these simulated views are computationally indistinguishable from the views of $P_1$ in the real execution of the protocol. $\mathcal{S}_1$ computes the matrix $A$ and $B$ honestly and run the OT simulator to simulate the view of the OT sender for the corrupted party $P_1$. For each $x\in I $, it computes $v = F_k(H_1(x))$ and the OPRF value $\psi = H_2(A_1[v[1]]||\dots||A_w[v[w]])$. Let the set of OPRF values computed by $x\in I $ be $\mathrm{\Psi}_I$. Subsequently it selects $N-|I|$ random 
$\ell_2$-bit random strings and let the set of these strings be $\mathrm{\Psi}_R$. In Step \ref{send psi}, it send $\mathrm{\Psi}_I\cup \mathrm{\Psi}_R$ to $P_1$ and finally outputs the $P_1$' simulated view.
We prove $\sdv_1(\lambda , X_1, \mathcal{F}(X_1, \dots, X_n)) \cindist view_1^\Pi (X_1, \dots, X_n) $ via a sequence of computationally indistinguishable hybrid argument.

\noindent$\mathbf{Hybrid_0:}$ The view of $P_1$ in the real protocol.

\noindent$\mathbf{Hybrid_1:}$ Same as $\mathbf{Hybrid_0}$ except that the protocol aborts if there exist $x\neq y\in X_1\cup\dots\cup X_n$ such that $H_1(x) = H_1(y)$. This hybrid is identical to $\mathbf{Hybrid_0}$ except negligible probability because $H_1$ is collision resistant and then the aborting probability is negligible.

\noindent$\mathbf{Hybrid_2:}$ Same as $\mathbf{Hybrid_1}$ except that the protocol aborts if there exist $x\in X_n \setminus I$ such that for $v = F_k(H_1(x))$, there are fewer than $\lambda$ 1's in $D^1_1[v[1]],\dots,D^1_w[v[w]]$. The choice of parameter $m,w$ ensure that if $F$ is a random function and $H_1(x)$ is different for each $x \in X_1\cup \dots \cup X_n$, the probability of abort is negligible by Lemma \ref{lem:1}.

\noindent$\mathbf{Hybrid_3:}$ Same as $\mathbf{Hybrid_2}$ except that $\sdv_1$ run the OT simulator to simulate the view of an OT sender for $P_1$. This hybrid is computationally indistinguishable from $\mathbf{Hybrid_2}$ by the security of OT protocol.

\noindent$\mathbf{Hybrid_4:}$ Same as $\mathbf{Hybrid_3}$ except that for $x\in X_n \setminus I$, $\sdv_1$ replaces the OPRF value of $x$ by $\ell_2$-bit random string. This hybrid is computationally indistinguishable from $\mathbf{Hybrid_4}$ by the $t$ party $d$-Hamming correlation robustness. For each $x\in X_n \setminus I$, let $v = F_k(H_1(x)) $, $a = A_1[v[1]] || \dots || A_w[v[w]]$ and $b_i = D^i_1[v[1]]||\dots||D^i_w[v[w]]$ for $i\in[n-1]$. We can compute that $C^n_1[v[1]] || \dots || C^n_w[v[w]] = a \oplus[b_1 \cdot s_2]\oplus\dots\oplus[b_{n-1} \cdot s_n]$. Since $||b_i||_H \geq d$ and $s_{i+1}$ is randomly sampled and unknown to $P_1$ for $i\in[n-1]$, the OPRF value of $x$ which is $H_2(C^n_1[v[1]] || \dots || C^n_w[v[w]])$ is pseudorandom for $P_1$ by the $t$ party $d$-Hamming correlation robustness of $H_2$.

\noindent$\mathbf{Hybrid_5:}$ Same as $\mathbf{Hybrid_4}$ except that the protocol does not abort. From the above discussion, it is evident that the probability of a protocol abort is negligible. This hybrid is $P_1$'s view simulated by $\mathcal{S}_1$.
\end{proof}
\section{Experiments}
\begin{table*}[htb]
\centering
\caption{Running time (in millisecond) of the MPSI protocols in LAN and WAN settings.}
\label{tab:LAN+WAN}
\renewcommand\arraystretch{1.2}
\resizebox{\linewidth}{!}{
\setlength{\tabcolsep}{1mm}{
\begin{tabular}{|c|c|cccccccccccccc|}
\hline
 & \multicolumn{1}{c|}{}& \multicolumn{14}{c|}{\textbf{Total running time for different parties(ms)}}\\ \cline{3-16} 
 & \multicolumn{1}{c|}{}& \multicolumn{7}{c|}{\textbf{LAN (5000M, 0.2ms RTT)}}& \multicolumn{7}{c|}{\textbf{WAN (200M, 2ms RTT)}}\\ \cline{3-16} 
 \multirow{-3}{*}{\textbf{Set size}} & \multicolumn{1}{c|}{\multirow{-3}{*}{\textbf{Protocol}}} & \multicolumn{1}{c|}{4} & \multicolumn{1}{c|}{6} & \multicolumn{1}{c|}{8} & \multicolumn{1}{c|}{10} & \multicolumn{1}{c|}{12} & \multicolumn{1}{c|}{14} & \multicolumn{1}{c|}{15} & \multicolumn{1}{c|}{4} & \multicolumn{1}{c|}{6} & \multicolumn{1}{c|}{8} & \multicolumn{1}{c|}{10} & \multicolumn{1}{c|}{12} & \multicolumn{1}{c|}{14} & 15 \\ \hline
 & KMPRT~\cite{kolesnikov2017practical} & \multicolumn{1}{c|}{364} & \multicolumn{1}{c|}{499} & \multicolumn{1}{c|}{864} & \multicolumn{1}{c|}{1140} & \multicolumn{1}{c|}{1524} & \multicolumn{1}{c|}{1986} & \multicolumn{1}{c|}{2268} & \multicolumn{1}{c|}{947} & \multicolumn{1}{c|}{2153} & \multicolumn{1}{c|}{4046} & \multicolumn{1}{c|}{6629} & \multicolumn{1}{c|}{9881} & \multicolumn{1}{c|}{13788} & 16009 \\ \cline{2-16} 
 & KMPRT(AUG)~\cite{kolesnikov2017practical} & \multicolumn{1}{c|}{400} & \multicolumn{1}{c|}{431} & \multicolumn{1}{c|}{482} & \multicolumn{1}{c|}{499} & \multicolumn{1}{c|}{589} & \multicolumn{1}{c|}{618} & \multicolumn{1}{c|}{625} & \multicolumn{1}{c|}{1592} & \multicolumn{1}{c|}{2274} & \multicolumn{1}{c|}{3092} & \multicolumn{1}{c|}{3880} & \multicolumn{1}{c|}{4679} & \multicolumn{1}{c|}{5498} & 5949 \\ \cline{2-16} 
 & NTY~\cite{nevo2021simple} & \multicolumn{1}{c|}{\cellcolor[HTML]{4E83FD}232} & \multicolumn{1}{c|}{\cellcolor[HTML]{4E83FD}238} & \multicolumn{1}{c|}{\cellcolor[HTML]{4E83FD}198} & \multicolumn{1}{c|}{\cellcolor[HTML]{4E83FD}260} & \multicolumn{1}{c|}{\cellcolor[HTML]{4E83FD}297} & \multicolumn{1}{c|}{\cellcolor[HTML]{4E83FD}320} & \multicolumn{1}{c|}{\cellcolor[HTML]{4E83FD}395} & \multicolumn{1}{c|}{\cellcolor[HTML]{4E83FD}365} & \multicolumn{1}{c|}{557} & \multicolumn{1}{c|}{728} & \multicolumn{1}{c|}{893} & \multicolumn{1}{c|}{1114} & \multicolumn{1}{c|}{1345} & 1434 \\ \cline{2-16} 
 \multirow{-4}{*}{$2^{12}$} & Ours & \multicolumn{1}{c|}{363} & \multicolumn{1}{c|}{374} & \multicolumn{1}{c|}{394} & \multicolumn{1}{c|}{466} & \multicolumn{1}{c|}{457} & \multicolumn{1}{c|}{465} & \multicolumn{1}{c|}{728} & \multicolumn{1}{c|}{366} & \multicolumn{1}{c|}{\cellcolor[HTML]{4E83FD}417} & \multicolumn{1}{c|}{\cellcolor[HTML]{4E83FD}493} & \multicolumn{1}{c|}{\cellcolor[HTML]{4E83FD}595} & \multicolumn{1}{c|}{\cellcolor[HTML]{4E83FD}660} & \multicolumn{1}{c|}{\cellcolor[HTML]{4E83FD}796} & \cellcolor[HTML]{4E83FD}673 \\ \hline
 & KMPRT~\cite{kolesnikov2017practical} & \multicolumn{1}{c|}{551} & \multicolumn{1}{c|}{885} & \multicolumn{1}{c|}{1435} & \multicolumn{1}{c|}{1853} & \multicolumn{1}{c|}{2488} & \multicolumn{1}{c|}{3132} & \multicolumn{1}{c|}{3602} & \multicolumn{1}{c|}{1831} & \multicolumn{1}{c|}{4537} & \multicolumn{1}{c|}{8598} & \multicolumn{1}{c|}{14061} & \multicolumn{1}{c|}{20909} & \multicolumn{1}{c|}{29186} & 33839 \\ \cline{2-16} 
 & KMPRT(AUG)~\cite{kolesnikov2017practical} & \multicolumn{1}{c|}{681} & \multicolumn{1}{c|}{802} & \multicolumn{1}{c|}{833} & \multicolumn{1}{c|}{915} & \multicolumn{1}{c|}{1037} & \multicolumn{1}{c|}{1141} & \multicolumn{1}{c|}{1219} & \multicolumn{1}{c|}{3417} & \multicolumn{1}{c|}{5197} & \multicolumn{1}{c|}{6689} & \multicolumn{1}{c|}{8488} & \multicolumn{1}{c|}{10231} & \multicolumn{1}{c|}{12072} & 12937 \\ \cline{2-16} 
 & NTY~\cite{nevo2021simple} & \multicolumn{1}{c|}{\cellcolor[HTML]{4E83FD}229} & \multicolumn{1}{c|}{\cellcolor[HTML]{4E83FD}294} & \multicolumn{1}{c|}{\cellcolor[HTML]{4E83FD}323} & \multicolumn{1}{c|}{\cellcolor[HTML]{4E83FD}339} & \multicolumn{1}{c|}{\cellcolor[HTML]{4E83FD}414} & \multicolumn{1}{c|}{\cellcolor[HTML]{4E83FD}427} & \multicolumn{1}{c|}{\cellcolor[HTML]{4E83FD}516} & \multicolumn{1}{c|}{599} & \multicolumn{1}{c|}{1041} & \multicolumn{1}{c|}{1355} & \multicolumn{1}{c|}{1726} & \multicolumn{1}{c|}{2122} & \multicolumn{1}{c|}{2645} & 2754 \\ \cline{2-16} 
\multirow{-4}{*}{$2^{13}$} & Ours & \multicolumn{1}{c|}{395} & \multicolumn{1}{c|}{453} & \multicolumn{1}{c|}{428} & \multicolumn{1}{c|}{527} & \multicolumn{1}{c|}{538} & \multicolumn{1}{c|}{524} & \multicolumn{1}{c|}{585} & \multicolumn{1}{c|}{\cellcolor[HTML]{4E83FD}439} & \multicolumn{1}{c|}{\cellcolor[HTML]{4E83FD}575} & \multicolumn{1}{c|}{\cellcolor[HTML]{4E83FD}669} & \multicolumn{1}{c|}{\cellcolor[HTML]{4E83FD}790} & \multicolumn{1}{c|}{\cellcolor[HTML]{4E83FD}886} & \multicolumn{1}{c|}{\cellcolor[HTML]{4E83FD}991} & \cellcolor[HTML]{4E83FD}1315 \\ \hline
 & KMPRT~\cite{kolesnikov2017practical} & \multicolumn{1}{c|}{996} & \multicolumn{1}{c|}{1521} & \multicolumn{1}{c|}{2326} & \multicolumn{1}{c|}{3500} & \multicolumn{1}{c|}{4525} & \multicolumn{1}{c|}{5715} & \multicolumn{1}{c|}{6327} & \multicolumn{1}{c|}{3547} & \multicolumn{1}{c|}{8641} & \multicolumn{1}{c|}{16809} & \multicolumn{1}{c|}{27696} & \multicolumn{1}{c|}{41296} & \multicolumn{1}{c|}{57663} & 66818 \\ \cline{2-16} 
 & KMPRT(AUG)~\cite{kolesnikov2017practical} & \multicolumn{1}{c|}{1288} & \multicolumn{1}{c|}{1434} & \multicolumn{1}{c|}{1585} & \multicolumn{1}{c|}{1742} & \multicolumn{1}{c|}{1893} & \multicolumn{1}{c|}{2088} & \multicolumn{1}{c|}{2171} & \multicolumn{1}{c|}{6194} & \multicolumn{1}{c|}{9663} & \multicolumn{1}{c|}{13252} & \multicolumn{1}{c|}{16774} & \multicolumn{1}{c|}{20298} & \multicolumn{1}{c|}{23850} & 25753 \\ \cline{2-16} 
 & NTY~\cite{nevo2021simple} & \multicolumn{1}{c|}{\cellcolor[HTML]{4E83FD}334} & \multicolumn{1}{c|}{\cellcolor[HTML]{4E83FD}508} & \multicolumn{1}{c|}{\cellcolor[HTML]{4E83FD}542} & \multicolumn{1}{c|}{\cellcolor[HTML]{4E83FD}567} & \multicolumn{1}{c|}{632} & \multicolumn{1}{c|}{753} & \multicolumn{1}{c|}{\cellcolor[HTML]{4E83FD}686} & \multicolumn{1}{c|}{1086} & \multicolumn{1}{c|}{1751} & \multicolumn{1}{c|}{2541} & \multicolumn{1}{c|}{3295} & \multicolumn{1}{c|}{4055} & \multicolumn{1}{c|}{4964} & 5388 \\ \cline{2-16} 
\multirow{-4}{*}{$2^{14}$} & Ours & \multicolumn{1}{c|}{475} & \multicolumn{1}{c|}{548} & \multicolumn{1}{c|}{559} & \multicolumn{1}{c|}{600} & \multicolumn{1}{c|}{\cellcolor[HTML]{4E83FD}594} & \multicolumn{1}{c|}{\cellcolor[HTML]{4E83FD}632} & \multicolumn{1}{c|}{969} & \multicolumn{1}{c|}{\cellcolor[HTML]{4E83FD}646} & \multicolumn{1}{c|}{\cellcolor[HTML]{4E83FD}868} & \multicolumn{1}{c|}{\cellcolor[HTML]{4E83FD}988} & \multicolumn{1}{c|}{\cellcolor[HTML]{4E83FD}1221}  & \multicolumn{1}{c|}{\cellcolor[HTML]{4E83FD}2160}  & \multicolumn{1}{c|}{\cellcolor[HTML]{4E83FD}2660}  & \cellcolor[HTML]{4E83FD}3339  \\ \hline
 & KMPRT~\cite{kolesnikov2017practical} & \multicolumn{1}{c|}{1735} & \multicolumn{1}{c|}{3077} & \multicolumn{1}{c|}{4600} & \multicolumn{1}{c|}{6597} & \multicolumn{1}{c|}{8806} & \multicolumn{1}{c|}{10552} & \multicolumn{1}{c|}{11882} & \multicolumn{1}{c|}{6869} & \multicolumn{1}{c|}{17035} & \multicolumn{1}{c|}{33101} & \multicolumn{1}{c|}{54600} & \multicolumn{1}{c|}{81490} & \multicolumn{1}{c|}{113746} & 131870 \\ \cline{2-16} 
 & KMPRT(AUG)~\cite{kolesnikov2017practical} & \multicolumn{1}{c|}{2645} & \multicolumn{1}{c|}{2882} & \multicolumn{1}{c|}{3259} & \multicolumn{1}{c|}{3568} & \multicolumn{1}{c|}{3730} & \multicolumn{1}{c|}{4236} & \multicolumn{1}{c|}{4273} & \multicolumn{1}{c|}{13516} & \multicolumn{1}{c|}{20494} & \multicolumn{1}{c|}{26359} & \multicolumn{1}{c|}{33339} & \multicolumn{1}{c|}{40494} & \multicolumn{1}{c|}{47818} & 51368 \\ \cline{2-16} 
 & NTY~\cite{nevo2021simple} & \multicolumn{1}{c|}{\cellcolor[HTML]{4E83FD}593} & \multicolumn{1}{c|}{819} & \multicolumn{1}{c|}{919} & \multicolumn{1}{c|}{960} & \multicolumn{1}{c|}{\cellcolor[HTML]{4E83FD}1134}  & \multicolumn{1}{c|}{\cellcolor[HTML]{4E83FD}1157}  & \multicolumn{1}{c|}{\cellcolor[HTML]{4E83FD}1255}  & \multicolumn{1}{c|}{2103} & \multicolumn{1}{c|}{3363} & \multicolumn{1}{c|}{4935} & \multicolumn{1}{c|}{6447} & \multicolumn{1}{c|}{7997} & \multicolumn{1}{c|}{9572} & 10398 \\ \cline{2-16} 
 \multirow{-4}{*}{$2^{15}$} & Ours & \multicolumn{1}{c|}{635} & \multicolumn{1}{c|}{\cellcolor[HTML]{4E83FD}663} & \multicolumn{1}{c|}{\cellcolor[HTML]{4E83FD}701} & \multicolumn{1}{c|}{\cellcolor[HTML]{4E83FD}764} & \multicolumn{1}{c|}{1521} & \multicolumn{1}{c|}{2058} & \multicolumn{1}{c|}{2070} & \multicolumn{1}{c|}{\cellcolor[HTML]{4E83FD}952} & \multicolumn{1}{c|}{\cellcolor[HTML]{4E83FD}1330}  & \multicolumn{1}{c|}{\cellcolor[HTML]{4E83FD}1868}  & \multicolumn{1}{c|}{\cellcolor[HTML]{4E83FD}2164}  & \multicolumn{1}{c|}{\cellcolor[HTML]{4E83FD}3356}  & \multicolumn{1}{c|}{\cellcolor[HTML]{4E83FD}3362}  & \cellcolor[HTML]{4E83FD}4801  \\ \hline
 & KMPRT~\cite{kolesnikov2017practical} & \multicolumn{1}{c|}{3087} & \multicolumn{1}{c|}{5982} & \multicolumn{1}{c|}{9854} & \multicolumn{1}{c|}{12995} & \multicolumn{1}{c|}{16776} & \multicolumn{1}{c|}{20642} & \multicolumn{1}{c|}{23348} & \multicolumn{1}{c|}{13788} & \multicolumn{1}{c|}{34382} & \multicolumn{1}{c|}{66066} & \multicolumn{1}{c|}{108869} & \multicolumn{1}{c|}{162357} & \multicolumn{1}{c|}{226912} & 262982 \\ \cline{2-16} 
 & KMPRT(AUG)~\cite{kolesnikov2017practical} & \multicolumn{1}{c|}{5482} & \multicolumn{1}{c|}{5929} & \multicolumn{1}{c|}{6477} & \multicolumn{1}{c|}{7285} & \multicolumn{1}{c|}{7543} & \multicolumn{1}{c|}{8235} & \multicolumn{1}{c|}{8788} & \multicolumn{1}{c|}{26920} & \multicolumn{1}{c|}{38385} & \multicolumn{1}{c|}{52767} & \multicolumn{1}{c|}{66796} & \multicolumn{1}{c|}{81011} & \multicolumn{1}{c|}{95177} & 101991 \\ \cline{2-16} 
 & NTY~\cite{nevo2021simple} & \multicolumn{1}{c|}{\cellcolor[HTML]{4E83FD}1083}  & \multicolumn{1}{c|}{\cellcolor[HTML]{4E83FD}1348}  & \multicolumn{1}{c|}{\cellcolor[HTML]{4E83FD}1620}  & \multicolumn{1}{c|}{\cellcolor[HTML]{4E83FD}1769}  & \multicolumn{1}{c|}{\cellcolor[HTML]{4E83FD}1936}  & \multicolumn{1}{c|}{\cellcolor[HTML]{4E83FD}2225}  & \multicolumn{1}{c|}{\cellcolor[HTML]{4E83FD}2308}  & \multicolumn{1}{c|}{3993} & \multicolumn{1}{c|}{6700} & \multicolumn{1}{c|}{9717} & \multicolumn{1}{c|}{12777} & \multicolumn{1}{c|}{15821} & \multicolumn{1}{c|}{19019} & 20559 \\ \cline{2-16} 
 \multirow{-4}{*}{$2^{16}$} & Ours & \multicolumn{1}{c|}{1003} & \multicolumn{1}{c|}{1037} & \multicolumn{1}{c|}{1070} & \multicolumn{1}{c|}{3646} & \multicolumn{1}{c|}{3435} & \multicolumn{1}{c|}{3522} & \multicolumn{1}{c|}{4242} & \multicolumn{1}{c|}{\cellcolor[HTML]{4E83FD}1633}  & \multicolumn{1}{c|}{\cellcolor[HTML]{4E83FD}2409}  & \multicolumn{1}{c|}{\cellcolor[HTML]{4E83FD}4601}  & \multicolumn{1}{c|}{\cellcolor[HTML]{4E83FD}4259}  & \multicolumn{1}{c|}{\cellcolor[HTML]{4E83FD}6054}  & \multicolumn{1}{c|}{\cellcolor[HTML]{4E83FD}8599}  & \cellcolor[HTML]{4E83FD}9255  \\ \hline
 & KMPRT~\cite{kolesnikov2017practical} & \multicolumn{1}{c|}{6549} & \multicolumn{1}{c|}{11218} & \multicolumn{1}{c|}{17340} & \multicolumn{1}{c|}{24655} & \multicolumn{1}{c|}{32832} & \multicolumn{1}{c|}{41030} & \multicolumn{1}{c|}{44658} & \multicolumn{1}{c|}{29658} & \multicolumn{1}{c|}{75154} & \multicolumn{1}{c|}{144507} & \multicolumn{1}{c|}{238440} & \multicolumn{1}{c|}{355742} & \multicolumn{1}{c|}{496675} & 576085 \\ \cline{2-16} 
 & KMPRT(AUG)~\cite{kolesnikov2017practical}  & \multicolumn{1}{c|}{10664} & \multicolumn{1}{c|}{12934} & \multicolumn{1}{c|}{13655} & \multicolumn{1}{c|}{15498} & \multicolumn{1}{c|}{17592} & \multicolumn{1}{c|}{18354} & \multicolumn{1}{c|}{19229} & \multicolumn{1}{c|}{59653} & \multicolumn{1}{c|}{85886} & \multicolumn{1}{c|}{116037} & \multicolumn{1}{c|}{147080} & \multicolumn{1}{c|}{178280} & \multicolumn{1}{c|}{209563} & 225231 \\ \cline{2-16} 
 & NTY~\cite{nevo2021simple} & \multicolumn{1}{c|}{\cellcolor[HTML]{4E83FD}2133}  & \multicolumn{1}{c|}{\cellcolor[HTML]{4E83FD}2505}  & \multicolumn{1}{c|}{\cellcolor[HTML]{4E83FD}3041}  & \multicolumn{1}{c|}{\cellcolor[HTML]{4E83FD}3225}  & \multicolumn{1}{c|}{\cellcolor[HTML]{4E83FD}3798}  & \multicolumn{1}{c|}{\cellcolor[HTML]{4E83FD}4166}  & \multicolumn{1}{c|}{\cellcolor[HTML]{4E83FD}4474}  & \multicolumn{1}{c|}{8628} & \multicolumn{1}{c|}{14166} & \multicolumn{1}{c|}{20867} & \multicolumn{1}{c|}{27532} & \multicolumn{1}{c|}{34252} & \multicolumn{1}{c|}{40943} & 44400 \\ \cline{2-16} 
 \multirow{-4}{*}{$2^{17}$} & Ours & \multicolumn{1}{c|}{2814} & \multicolumn{1}{c|}{3598} & \multicolumn{1}{c|}{3672} & \multicolumn{1}{c|}{4582} & \multicolumn{1}{c|}{5293} & \multicolumn{1}{c|}{5585} & \multicolumn{1}{c|}{5465} & \multicolumn{1}{c|}{\cellcolor[HTML]{4E83FD}6367}  & \multicolumn{1}{c|}{\cellcolor[HTML]{4E83FD}7335}  & \multicolumn{1}{c|}{\cellcolor[HTML]{4E83FD}8300}  & \multicolumn{1}{c|}{\cellcolor[HTML]{4E83FD}9809}  & \multicolumn{1}{c|}{\cellcolor[HTML]{4E83FD}12753} & \multicolumn{1}{c|}{\cellcolor[HTML]{4E83FD}14148} & \cellcolor[HTML]{4E83FD}14530 \\ \hline
 & KMPRT~\cite{kolesnikov2017practical} & \multicolumn{1}{c|}{12338} & \multicolumn{1}{c|}{21912} & \multicolumn{1}{c|}{35974} & \multicolumn{1}{c|}{49526} & \multicolumn{1}{c|}{63299} & \multicolumn{1}{c|}{81296} & \multicolumn{1}{c|}{91360} & \multicolumn{1}{c|}{60064} & \multicolumn{1}{c|}{148683} & \multicolumn{1}{c|}{288943} & \multicolumn{1}{c|}{476566} & \multicolumn{1}{c|}{711234} & \multicolumn{1}{c|}{992698} & 1152083  \\ \cline{2-16} 
 & KMPRT(AUG)~\cite{kolesnikov2017practical}  & \multicolumn{1}{c|}{24767} & \multicolumn{1}{c|}{26423} & \multicolumn{1}{c|}{29833} & \multicolumn{1}{c|}{32954} & \multicolumn{1}{c|}{35867} & \multicolumn{1}{c|}{39160} & \multicolumn{1}{c|}{39499} & \multicolumn{1}{c|}{119125} & \multicolumn{1}{c|}{180798} & \multicolumn{1}{c|}{232095} & \multicolumn{1}{c|}{295276} & \multicolumn{1}{c|}{356877} & \multicolumn{1}{c|}{423846} & 452325 \\ \cline{2-16} 
 & NTY~\cite{nevo2021simple} & \multicolumn{1}{c|}{4425} & \multicolumn{1}{c|}{5161} & \multicolumn{1}{c|}{5770} & \multicolumn{1}{c|}{6540} & \multicolumn{1}{c|}{7831} & \multicolumn{1}{c|}{8930} & \multicolumn{1}{c|}{9634} & \multicolumn{1}{c|}{17246} & \multicolumn{1}{c|}{28314} & \multicolumn{1}{c|}{41741} & \multicolumn{1}{c|}{54827} & \multicolumn{1}{c|}{68289} & \multicolumn{1}{c|}{81717} & 88472 \\ \cline{2-16} 
 \multirow{-4}{*}{$2^{18}$} & Ours & \multicolumn{1}{c|}{\cellcolor[HTML]{4E83FD}3205}  & \multicolumn{1}{c|}{\cellcolor[HTML]{4E83FD}4133}  & \multicolumn{1}{c|}{\cellcolor[HTML]{4E83FD}4480}  & \multicolumn{1}{c|}{\cellcolor[HTML]{4E83FD}6136}  & \multicolumn{1}{c|}{\cellcolor[HTML]{4E83FD}6929}  & \multicolumn{1}{c|}{\cellcolor[HTML]{4E83FD}7902}  & \multicolumn{1}{c|}{\cellcolor[HTML]{4E83FD}8085}  & \multicolumn{1}{c|}{\cellcolor[HTML]{4E83FD}7503}  & \multicolumn{1}{c|}{\cellcolor[HTML]{4E83FD}12085} & \multicolumn{1}{c|}{\cellcolor[HTML]{4E83FD}13582} & \multicolumn{1}{c|}{\cellcolor[HTML]{4E83FD}18579} & \multicolumn{1}{c|}{\cellcolor[HTML]{4E83FD}22203} & \multicolumn{1}{c|}{\cellcolor[HTML]{4E83FD}24799} & \cellcolor[HTML]{4E83FD}27074 \\ \hline
 & KMPRT~\cite{kolesnikov2017practical} & \multicolumn{1}{c|}{27061} & \multicolumn{1}{c|}{42049} & \multicolumn{1}{c|}{70068} & \multicolumn{1}{c|}{98985} & \multicolumn{1}{c|}{131225} & \multicolumn{1}{c|}{164695} & \multicolumn{1}{c|}{180946} & \multicolumn{1}{c|}{122911} & \multicolumn{1}{c|}{301132} & \multicolumn{1}{c|}{578093} & \multicolumn{1}{c|}{952718} & \multicolumn{1}{c|}{1421702}  & \multicolumn{1}{c|}{1985126}  & 2301667  \\ \cline{2-16} 
 & KMPRT(AUG)~\cite{kolesnikov2017practical}  & \multicolumn{1}{c|}{50014} & \multicolumn{1}{c|}{64313} & \multicolumn{1}{c|}{65266} & \multicolumn{1}{c|}{67896} & \multicolumn{1}{c|}{75725} & \multicolumn{1}{c|}{77460} & \multicolumn{1}{c|}{86233} & \multicolumn{1}{c|}{246857} & \multicolumn{1}{c|}{361365} & \multicolumn{1}{c|}{488146} & \multicolumn{1}{c|}{618449} & \multicolumn{1}{c|}{746404} & \multicolumn{1}{c|}{849883} & 934572 \\ \cline{2-16} 
 & NTY~\cite{nevo2021simple} & \multicolumn{1}{c|}{9056} & \multicolumn{1}{c|}{10179} & \multicolumn{1}{c|}{11537} & \multicolumn{1}{c|}{13194} & \multicolumn{1}{c|}{15445} & \multicolumn{1}{c|}{17475} & \multicolumn{1}{c|}{18861} & \multicolumn{1}{c|}{34880} & \multicolumn{1}{c|}{57035} & \multicolumn{1}{c|}{83062} & \multicolumn{1}{c|}{109812} & \multicolumn{1}{c|}{136550} & \multicolumn{1}{c|}{165247} & 176370 \\ \cline{2-16} 
 \multirow{-4}{*}{$2^{19}$} & Ours & \multicolumn{1}{c|}{\cellcolor[HTML]{4E83FD}8476}  & \multicolumn{1}{c|}{\cellcolor[HTML]{4E83FD}8540}  & \multicolumn{1}{c|}{\cellcolor[HTML]{4E83FD}8653}  & \multicolumn{1}{c|}{\cellcolor[HTML]{4E83FD}8698}  & \multicolumn{1}{c|}{\cellcolor[HTML]{4E83FD}8692}  & \multicolumn{1}{c|}{\cellcolor[HTML]{4E83FD}9476}  & \multicolumn{1}{c|}{\cellcolor[HTML]{4E83FD}9970}  & \multicolumn{1}{c|}{\cellcolor[HTML]{4E83FD}12193} & \multicolumn{1}{c|}{\cellcolor[HTML]{4E83FD}21440} & \multicolumn{1}{c|}{\cellcolor[HTML]{4E83FD}23439} & \multicolumn{1}{c|}{\cellcolor[HTML]{4E83FD}32219} & \multicolumn{1}{c|}{\cellcolor[HTML]{4E83FD}39165} & \multicolumn{1}{c|}{\cellcolor[HTML]{4E83FD}45423} & \cellcolor[HTML]{4E83FD}48800 \\ \hline
 & KMPRT~\cite{kolesnikov2017practical} & \multicolumn{1}{c|}{63815} & \multicolumn{1}{c|}{85631} & \multicolumn{1}{c|}{140710} & \multicolumn{1}{c|}{204628} & \multicolumn{1}{c|}{256021} & \multicolumn{1}{c|}{325696} & \multicolumn{1}{c|}{-}  & \multicolumn{1}{c|}{247219} & \multicolumn{1}{c|}{596205} & \multicolumn{1}{c|}{1159041}  & \multicolumn{1}{c|}{1908268}  & \multicolumn{1}{c|}{2845787}  & \multicolumn{1}{c|}{3974837}  & \multicolumn{1}{c|}{-}  \\ \cline{2-16} 
 & KMPRT(AUG)~\cite{kolesnikov2017practical}  & \multicolumn{1}{c|}{122619} & \multicolumn{1}{c|}{116997} & \multicolumn{1}{c|}{140912} & \multicolumn{1}{c|}{148125} & \multicolumn{1}{c|}{157234} & \multicolumn{1}{c|}{170809} & \multicolumn{1}{c|}{182509} & \multicolumn{1}{c|}{483092} & \multicolumn{1}{c|}{733914} & \multicolumn{1}{c|}{982592} & \multicolumn{1}{c|}{1248201}  & \multicolumn{1}{c|}{1501481}  & \multicolumn{1}{c|}{1722380}  & 1878569  \\ \cline{2-16} 
 & NTY~\cite{nevo2021simple} & \multicolumn{1}{c|}{19177} & \multicolumn{1}{c|}{21927} & \multicolumn{1}{c|}{25099} & \multicolumn{1}{c|}{30163} & \multicolumn{1}{c|}{33463} & \multicolumn{1}{c|}{36118} & \multicolumn{1}{c|}{38696} & \multicolumn{1}{c|}{71108} & \multicolumn{1}{c|}{123544} & \multicolumn{1}{c|}{167425} & \multicolumn{1}{c|}{219416} & \multicolumn{1}{c|}{276702} & \multicolumn{1}{c|}{339043} & 371039 \\ \cline{2-16} 
 \multirow{-4}{*}{$2^{20}$} & Ours & \multicolumn{1}{c|}{\cellcolor[HTML]{4E83FD}13109} & \multicolumn{1}{c|}{\cellcolor[HTML]{4E83FD}13859} & \multicolumn{1}{c|}{\cellcolor[HTML]{4E83FD}14170} & \multicolumn{1}{c|}{\cellcolor[HTML]{4E83FD}15284} & \multicolumn{1}{c|}{\cellcolor[HTML]{4E83FD}16755} & \multicolumn{1}{c|}{\cellcolor[HTML]{4E83FD}16893} & \multicolumn{1}{c|}{\cellcolor[HTML]{4E83FD}16753} & \multicolumn{1}{c|}{\cellcolor[HTML]{4E83FD}27000} & \multicolumn{1}{c|}{\cellcolor[HTML]{4E83FD}39202} & \multicolumn{1}{c|}{\cellcolor[HTML]{4E83FD}50541} & \multicolumn{1}{c|}{\cellcolor[HTML]{4E83FD}62909} & \multicolumn{1}{c|}{\cellcolor[HTML]{4E83FD}75512} & \multicolumn{1}{c|}{\cellcolor[HTML]{4E83FD}89505} & \cellcolor[HTML]{4E83FD}95911 \\ \hline
\end{tabular}}}
\end{table*}

\begin{figure*}[htb]
    \centering
    \subfigure[$|S|=2^{12}$]{\includegraphics[width=0.19\linewidth]{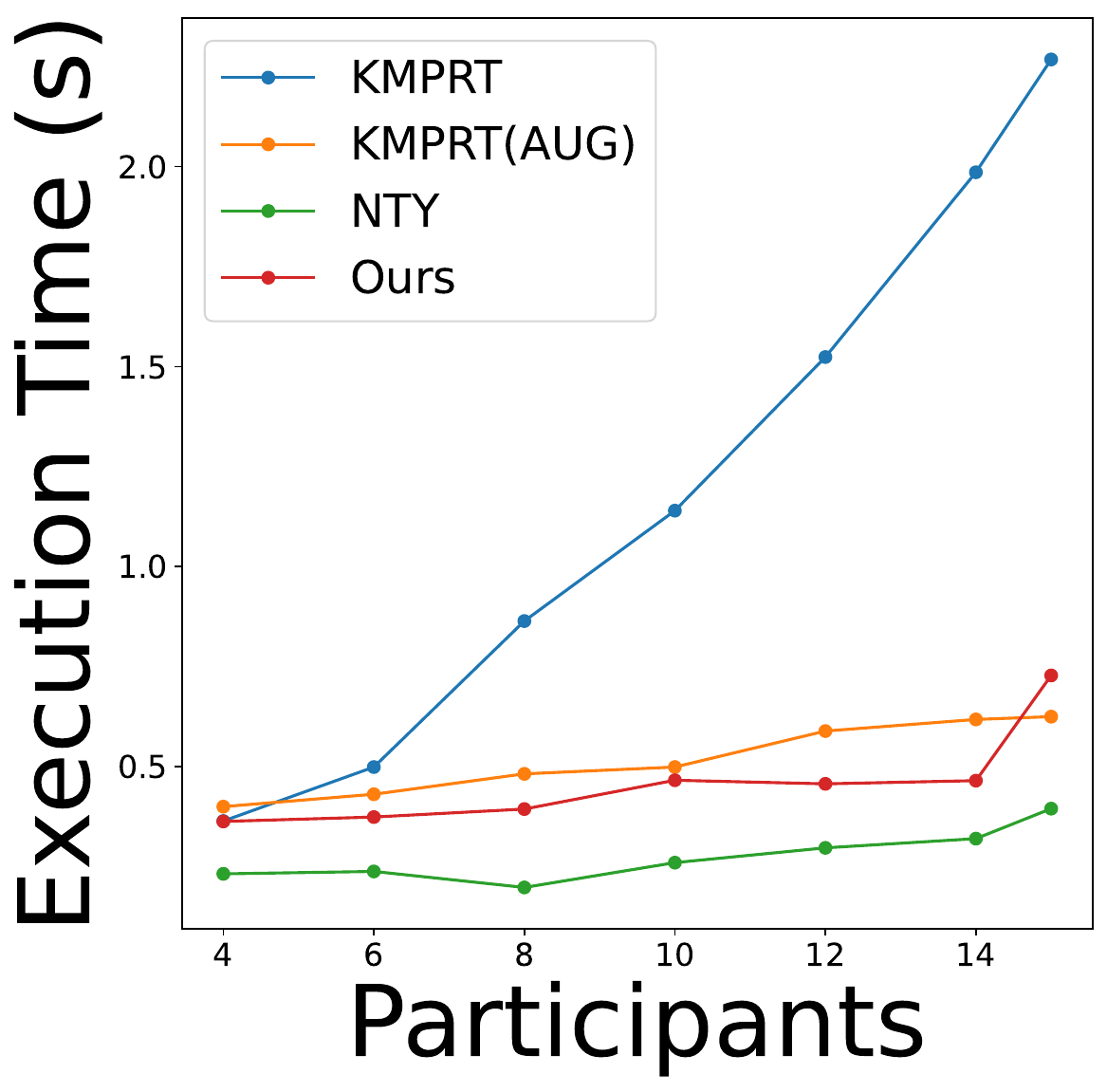}}\hspace{10pt}
    \subfigure[$|S|=2^{14}$]{\includegraphics[width=0.19\linewidth]{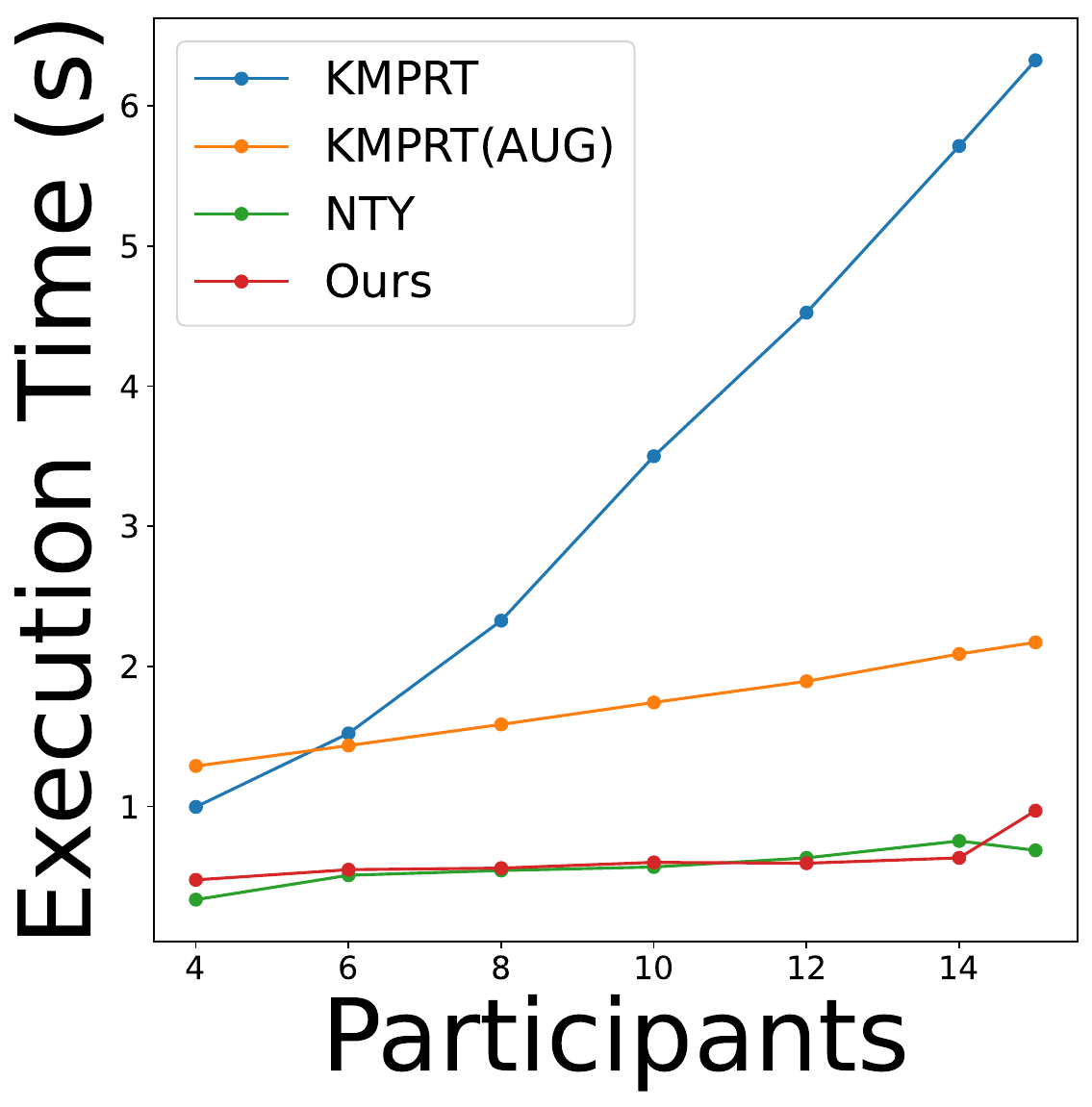}}
    \subfigure[$|S|=2^{16}$]{\includegraphics[width=0.19\linewidth]{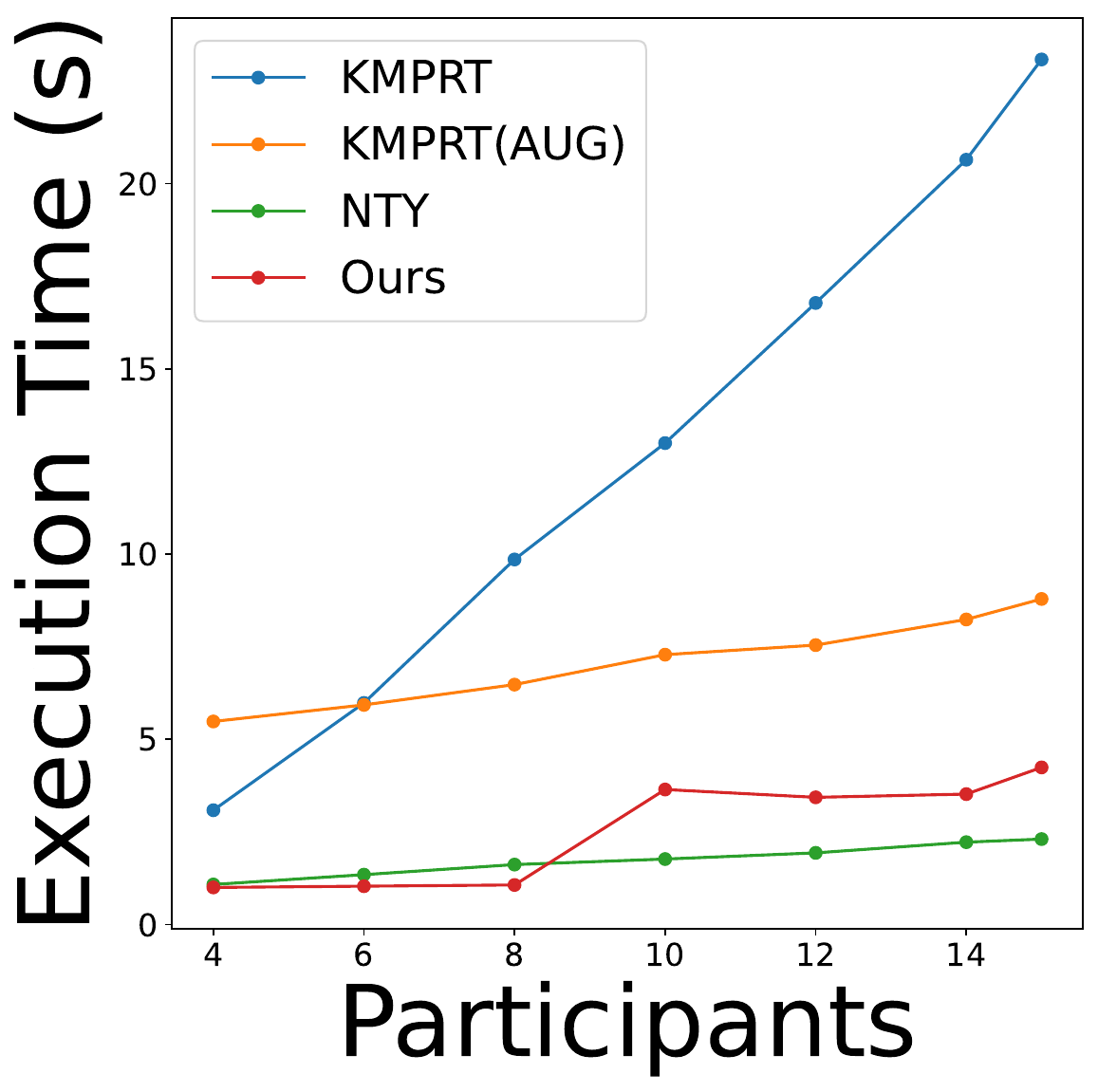}}
    \subfigure[$|S|=2^{18}$]{\includegraphics[width=0.19\linewidth]{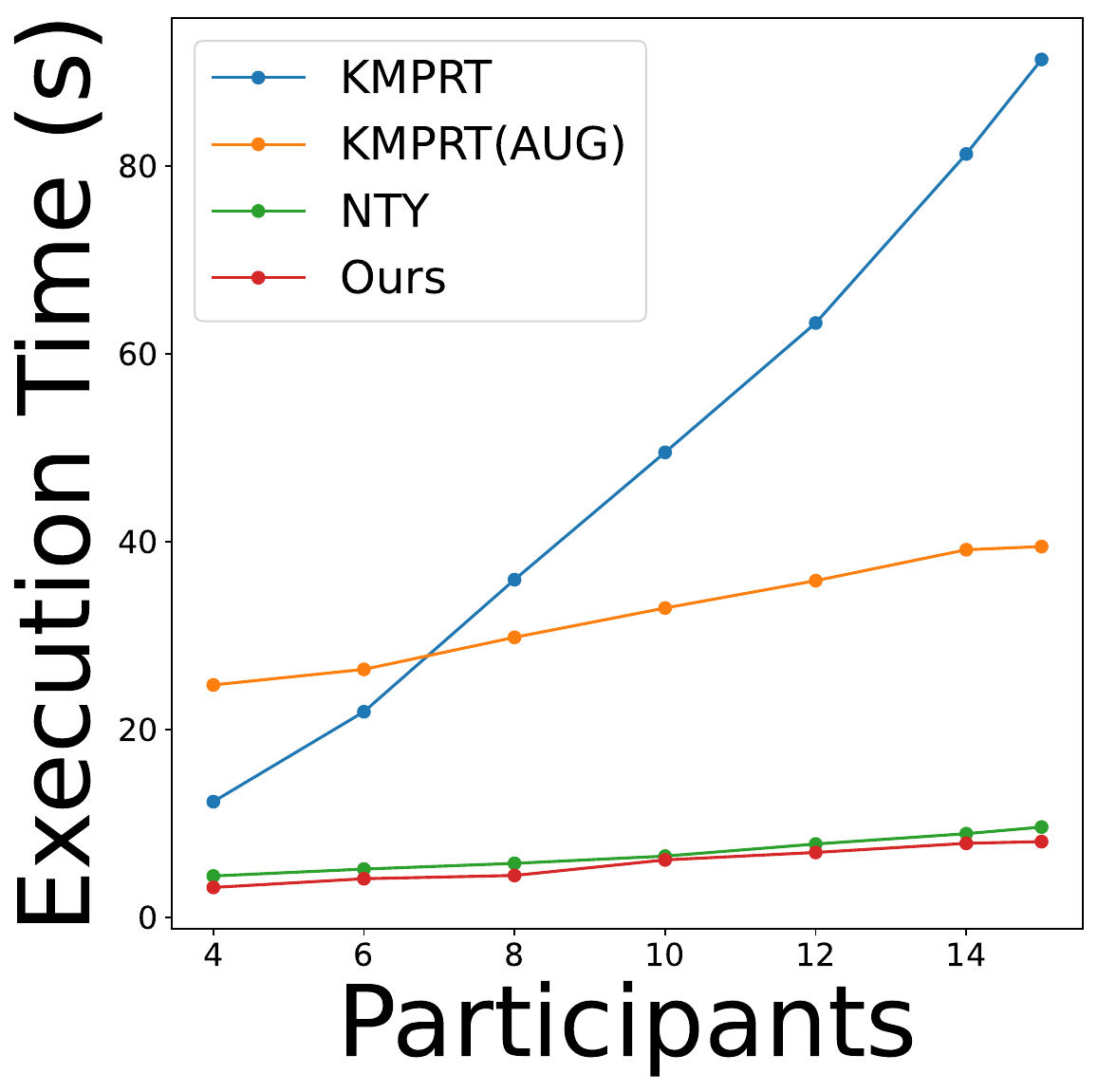}}
    \subfigure[$|S|=2^{20}$]{\includegraphics[width=0.19\linewidth]{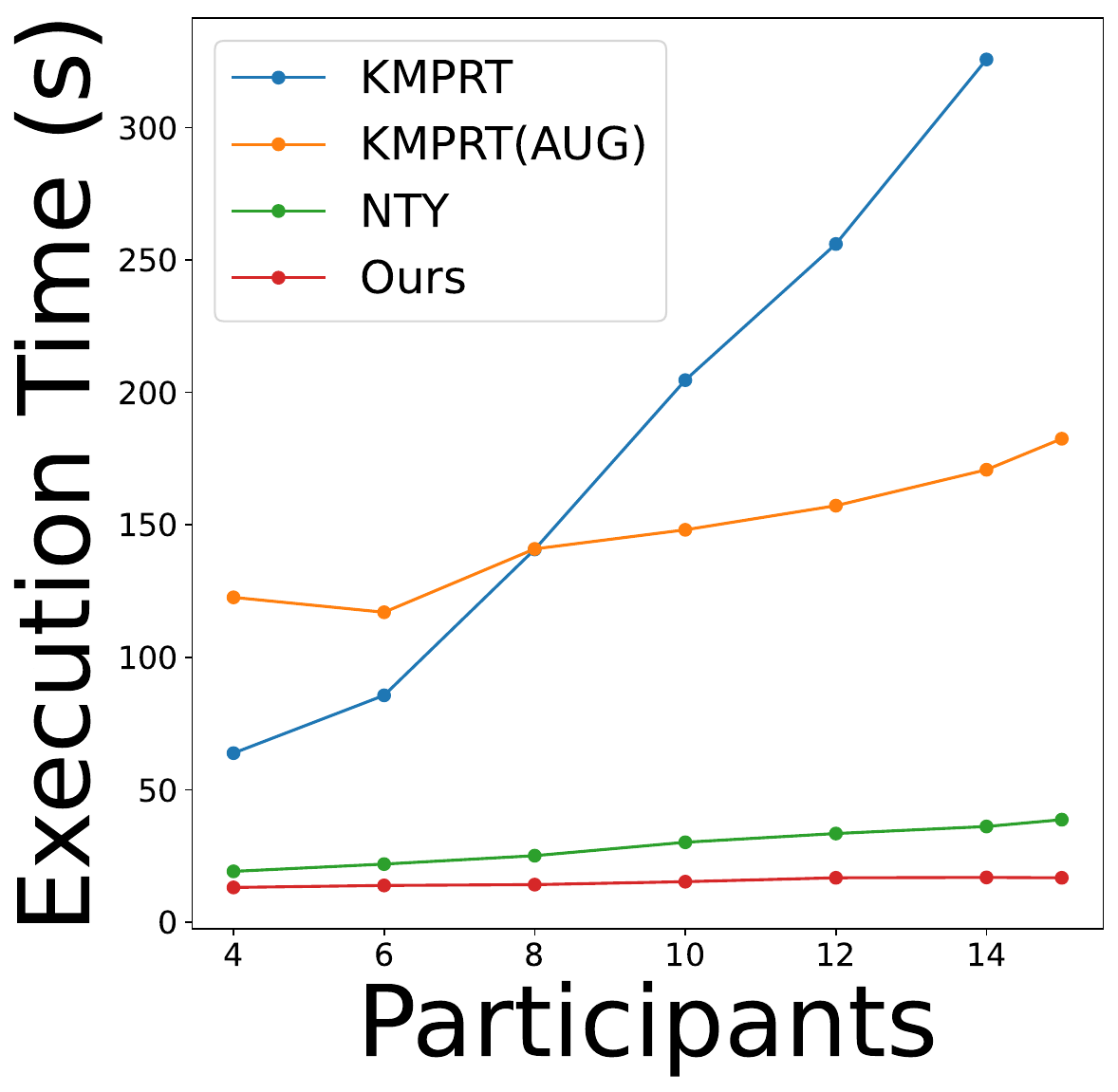}}
    \caption{Comparison of MPSI protocols under different data set sizes in LAN setting}
  \label{fig:lan}
\end{figure*}

\begin{figure*}[htb]
    \centering
    \subfigure[$|S|=2^{12}$]{\includegraphics[width=0.19\linewidth]{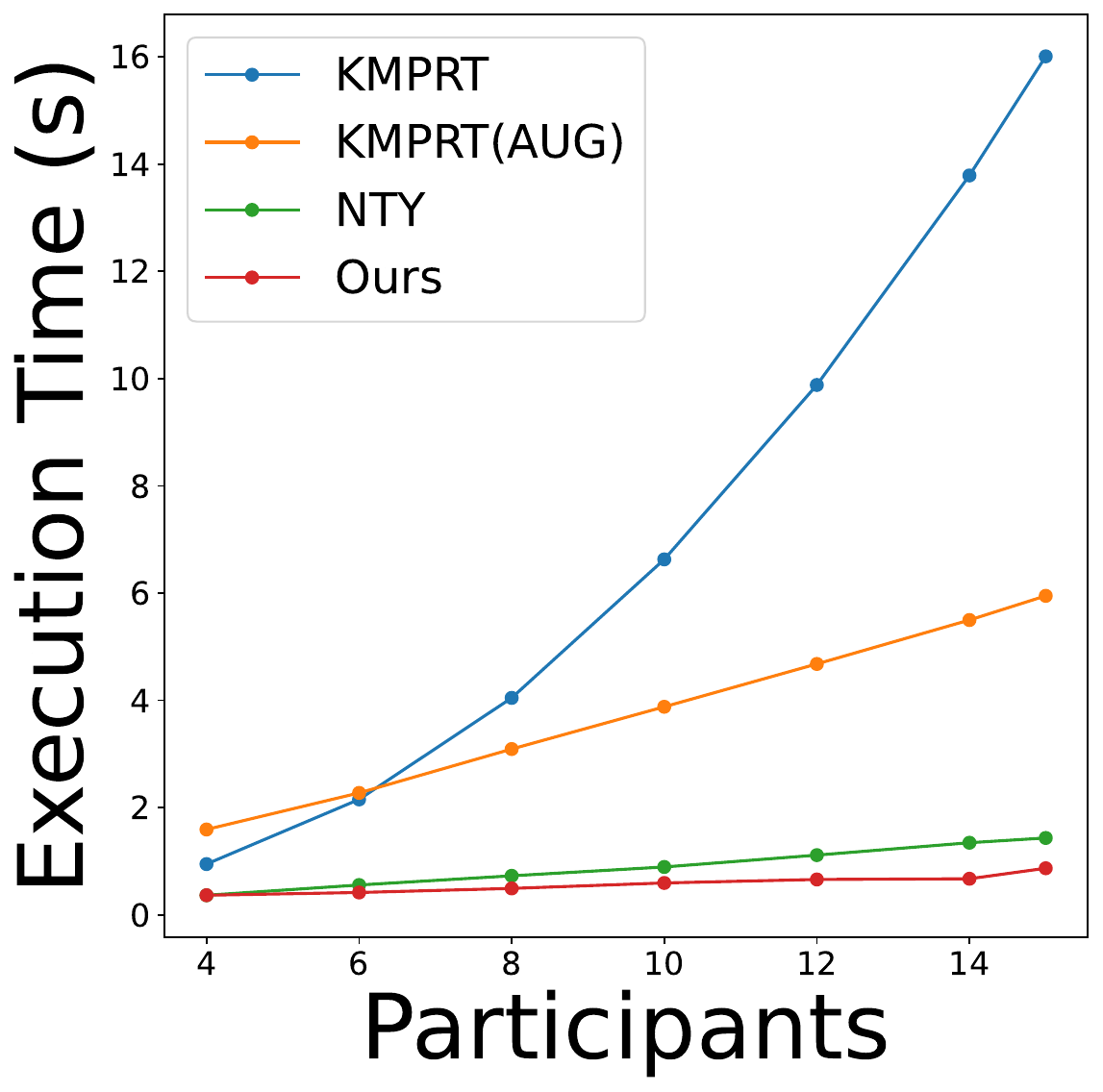}}
    \subfigure[$|S|=2^{14}$]{\includegraphics[width=0.19\linewidth]{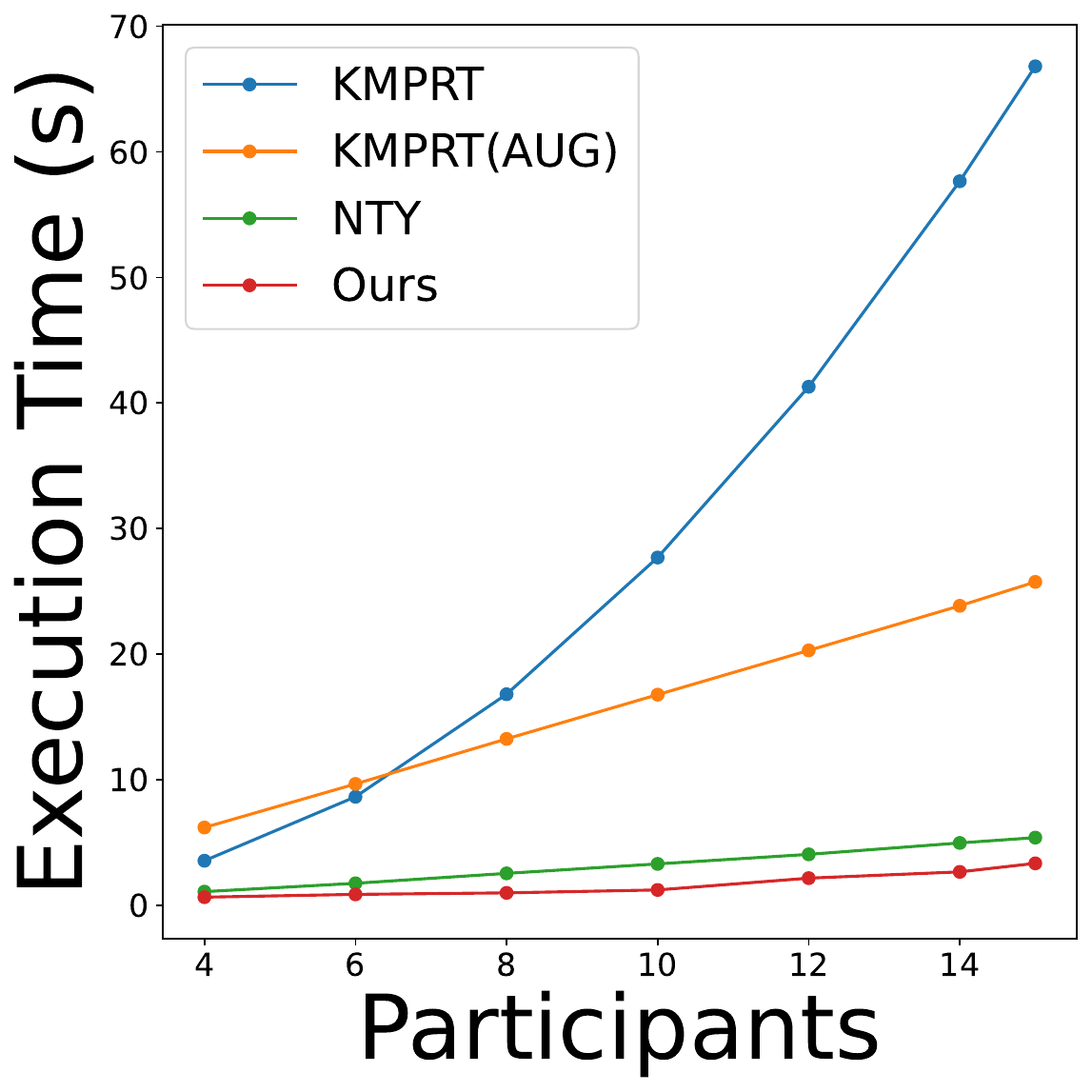}}
    \subfigure[$|S|=2^{16}$]{\includegraphics[width=0.19\linewidth]{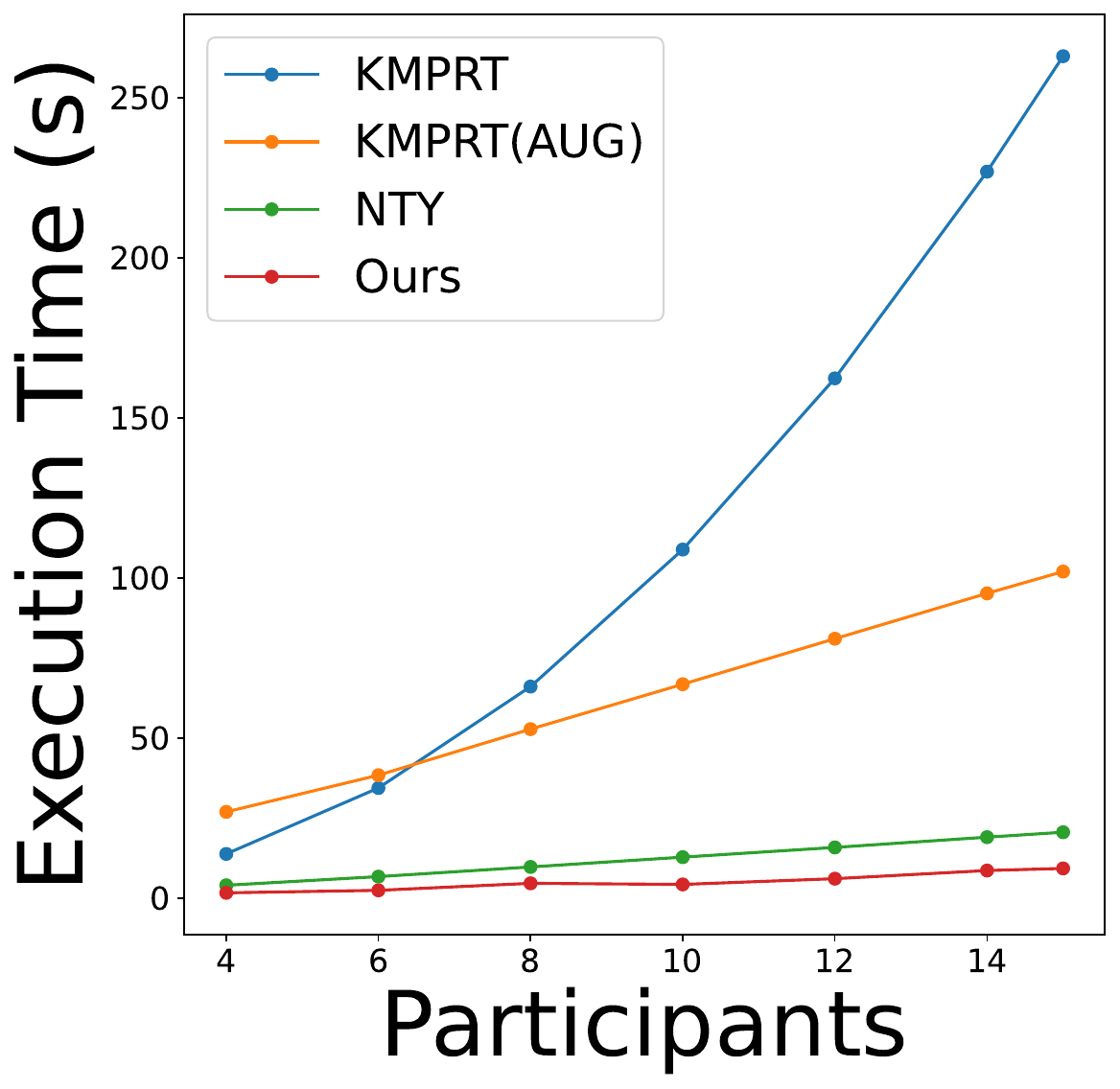}}
    \subfigure[$|S|=2^{18}$]{\includegraphics[width=0.19\linewidth]{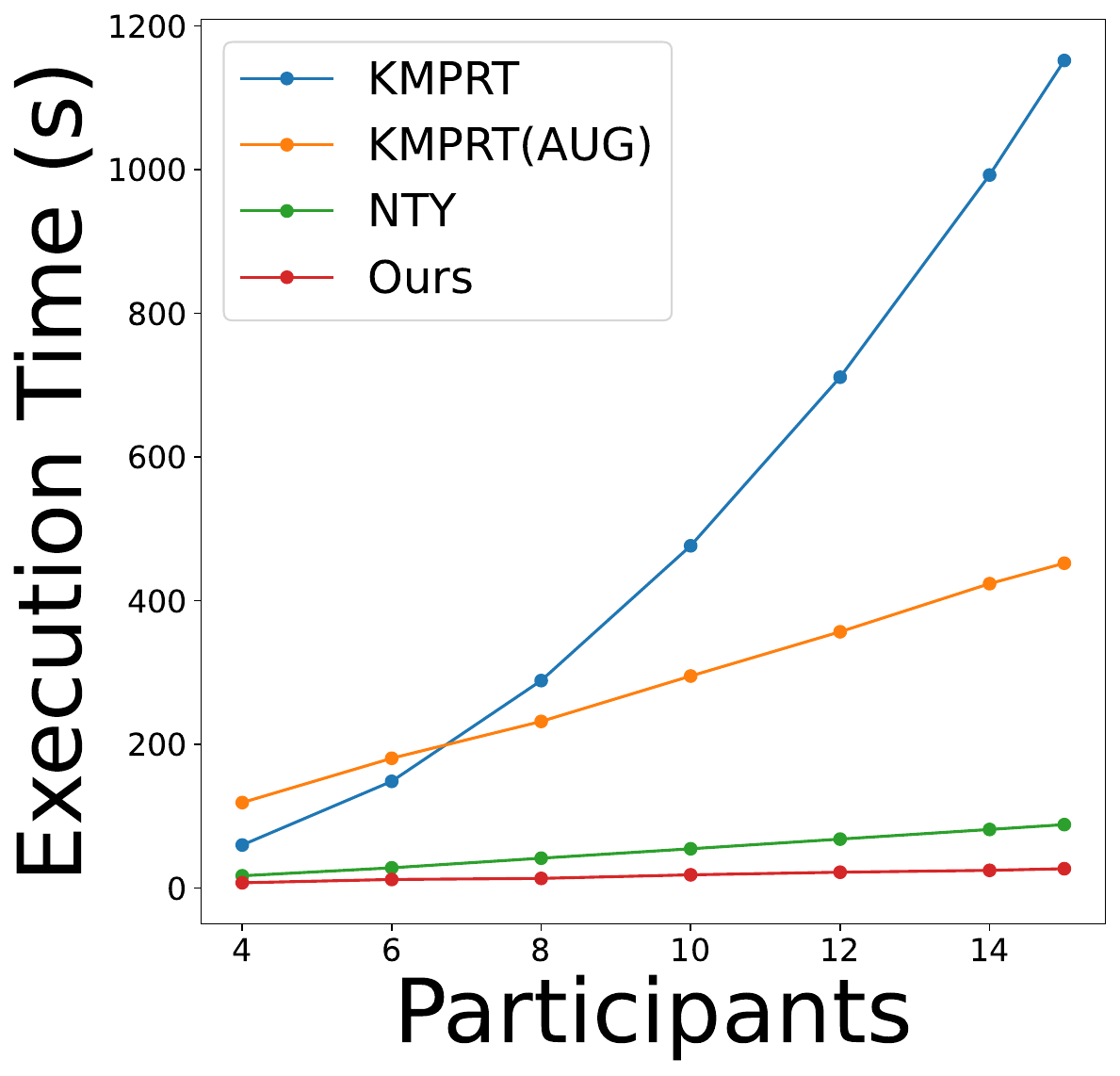}}
    \subfigure[$|S|=2^{20}$]{\includegraphics[width=0.19\linewidth]{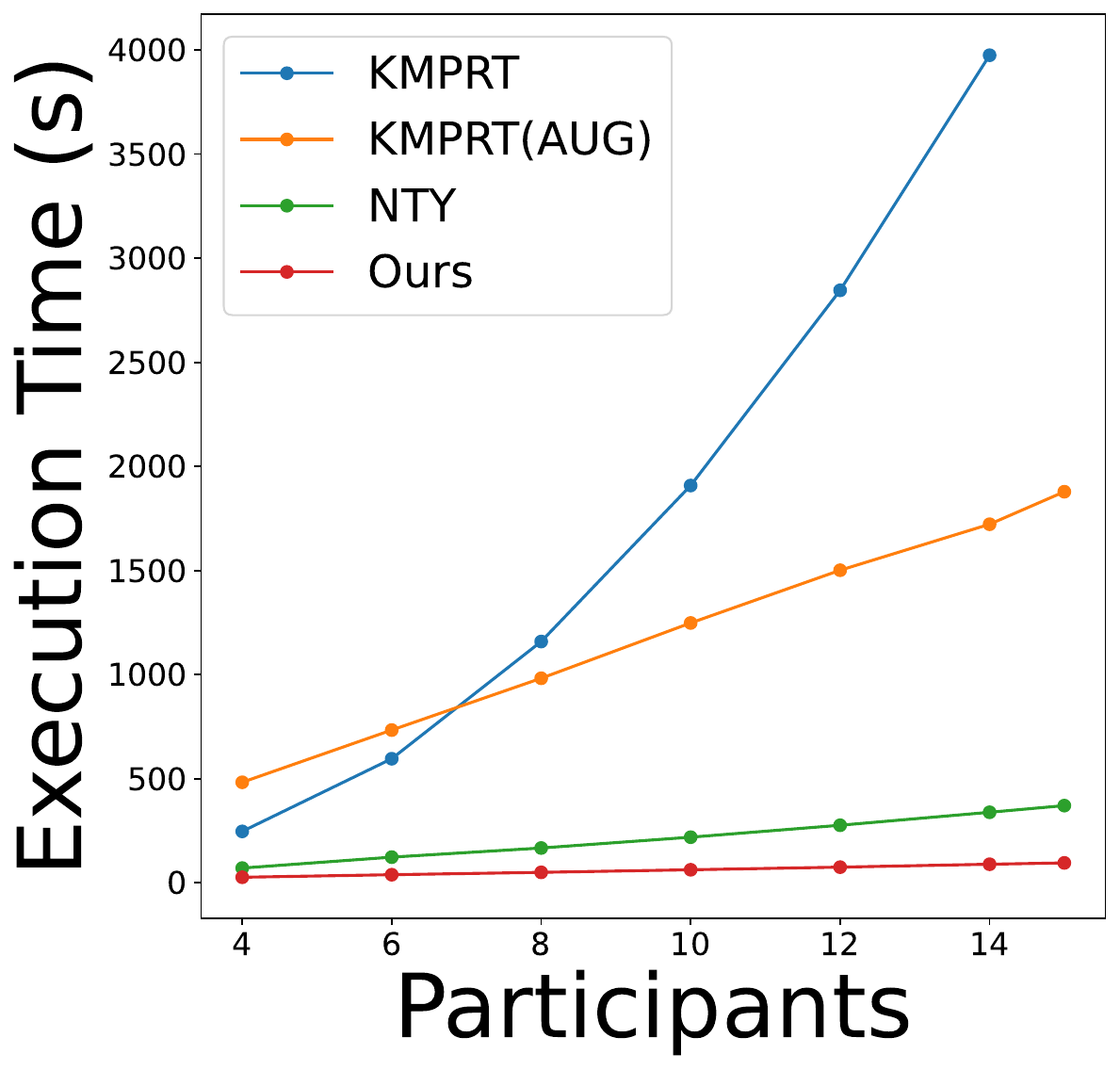}}
    \caption{Comparison of MPSI protocols under different data set sizes in WAN setting}
  \label{fig:wan}
\end{figure*}

\subsection{Theoretical analysis}
\label{subsec:parameter}
\subsubsection{Choice of $m, w$} The parameter $m,w$ in our protocol satisfy that if $F$ is a random function and $H_1(x)$ is different for each $x \in X_1\cup \dots \cup X_n$, then for $x\in X_n \setminus I$ and $v = F_k(H_1(x))$, there are at least $\lambda$ 1's in  $D^1_1[v[1]],\dots,D^1_w[v[w]]$ with all but negligible probability. 
This is to ensure that in the final step of the protocol, $P_1$ cannot brute-force information beyond obtaining the intersection from the OPRF values. In other words, the OPRF values for any item in the $P_n$'s input set which is not in the intersection $I$ remain pseudorandom to $P_1$. This is achieved because of the correlation robustness property of $H_2$. We will determine the value of $m$ (which is typically $N$ in this protocol) and then compute $w$ as follows.
The column $D_i$ was initialized as $1^m$, then for each $x\in X_1$, $P_1$ computes $v = F_k(H_1(x))$ and sets $D_i[v[i]] = 0$. Since $F$ is a random function and $H_1(x)$ is different for $x\in X_1$, $v$ is random and independent for $x\in X_1$ which means that the probability $\mathrm{Pr}[D_i^1[j] = 1] $ is same for all $j\in[m]$. especially, $\mathrm{Pr}[D_i^1[j] = 1] = (1- \frac{1}{m})^{n_1} $. Let $p = (1- \frac{1}{m})^{n_1}$. For any $x\in X_n \setminus I$, the probability that there are $d$ 1's in $D^1_1[v[1]],\dots,D^1_w[v[w]]$ is 
$$ \left( \begin{array}{c}
     w \\ d
\end{array} \right) p^d(1-p)^{w - d},$$
since $\mathrm{Pr}[D_i^1[v[i]] = 1] = p $ is independent for all $i \in [w]$. Then
for $x\in X_n \setminus I$, the proper $w$ that there are at least $d$ 1's with all but negligible probability can be computed by the union bound:
$$ N\cdot\sum_{k=0}^{d-1} \left( \begin{array}{c}
     w \\ k
\end{array} \right) p^k(1-p)^{w - k} \leq \negl[\sigma].$$

\subsubsection{Choice of $\ell_1$ and $\ell_2$} The parameters $\ell_1$ and $\ell_2$ are respectively the output lengths of hash functions $H_1$ and $H_2$. We need to set $\ell_1 = 2\lambda$ to guarantee collision resistance against the birthday attack and set $\ell_2 = \sigma + 2\log(N)$ to guarantee that the probability of collision in the MPSI protocol (Step \ref{h2}) is negligible for a semi-honest model, similarly as in \cite{pinkas2019spot,chase2020private}.

\subsubsection{Complexity analysis}
\label{subsec:complexity analysis}

In our MPSI protocol, we rely solely on lightweight cryptographic tools such at OT extension, hash functions, AES and bitwise operations. Without loss of generality, we consider the upper bound on the input set sizes of the parties as $N$\footnote{Typically, during testing, we assume that the size of the sets of all participants is $N$. Participants with sets smaller than $N$ can expand their set sizes to $N$ by adding random data.} and set $m=N$ as in \cite{chase2020private}. For the fixed $m,N$ and security parameter $\lambda$, we can compute $w, \ell_2$ using the method mentioned in Section \ref{subsec:parameter}. 

We denote party $P_1$ as the leader and party $P_2,\dots,P_{n-1}$ as the assistant. $P_1$ generates matrix $A$ and $B$ which cost linear complexity in $N$. Then $P_1$ runs $w$ OTs with $P_2$ that cost linear complexity in $N$. In this step, we significantly reduce communication and computational overhead by adopting the random OT scheme illustrated in Figure \ref{fig:ROT1}. 
In the final intersection calculation stage, $P_1$ only needs to perform hash functions and bitwise operation with a complexity of $N$.
For party $P_2,\dots,P_{n-1}$, they first act as the OT receiver and run OT to get the matrix $C^i$ for $i\in\{2,\dots,n-1 \}$. 
Then they generate matrix $D^i$ and run $w$ OTs with $P_{i+1}$ and use random OT in Figure \ref{fig:ROT2} to decrease communication and computational overhead which cost linear complexity in $N$. $P_n$ finally compute the OPRF value by hash functions and bitwise operation. 

Our protocol is currently the only one under the semi-honest model where the leader's and assistant's communication and computational complexities are equal. 
In traditional protocols, the leader typically bears a significant burden of communication and computation, which can impose considerable bandwidth and computational pressure on the leader.
The main reason for this situation is that most existing protocols predominantly employ mesh and star topology, which contribute to the imbalance in communication and computation workload, primarily borne by the leader. 
Kavousi et al. in \cite{kavousi2021efficient} proposed a protocol based on a star + wheel structure. In their protocol, all parties communicate with the leader once in the first step, followed by the protocol running in a wheel-like structure. 
Our protocol is a simple ring structure, where each participant communicates once with its adjacent party. This allows our protocol's communication and computational complexity for the leader and assistant to depend solely on the size of the input set, independent of the number of parties.

In table \ref{tab:complexity}, we compared the communication and computational complexity of our protocol with several currently efficient protocols \cite{kolesnikov2017practical,kavousi2021efficient,nevo2021simple,vos2022fast}. The criteria we selected for comparison include two points: first, the protocol needs to provide up to $n-1$ collusion resistance similar to our protocol, and second, the protocol needs to demonstrate high operational efficiency.

\begin{table*}[htb]
\centering
\caption{Running time (in millisecond) the MPSI protocols over large data sets}
\label{tab:bigdataLAN+WAN}
\renewcommand\arraystretch{1.2}
\setlength{\tabcolsep}{1mm}{
\begin{tabular}{|c|c|cccccccccc|}
\hline
\multirow{3}{*}{Set size} & \multirow{3}{*}{Protocol}  & \multicolumn{10}{c|}{Total running time for different parties(ms)}  \\ \cline{3-12} 
 & & \multicolumn{5}{c|}{LAN (5000M, 0.2 ms RTT)}  & \multicolumn{5}{c|}{WAN (200M, 2 ms RTT)} \\ \cline{3-12} 
 & & \multicolumn{1}{c|}{$|\mathcal{C}|=16$} & \multicolumn{1}{c|}{$|\mathcal{C}|=17$} & \multicolumn{1}{c|}{$|\mathcal{C}|=18$} & \multicolumn{1}{c|}{$|\mathcal{C}|=19$} & \multicolumn{1}{c|}{$|\mathcal{C}|=20$} & \multicolumn{1}{c|}{$|\mathcal{C}|=16$} & \multicolumn{1}{c|}{$|\mathcal{C}|=17$} & \multicolumn{1}{c|}{$|\mathcal{C}|=18$} & \multicolumn{1}{c|}{$|\mathcal{C}|=19$} & \multicolumn{1}{c|}{$|\mathcal{C}|=20$} \\ \hline
\multirow{2}{*}{$2^{20}$} & NTY~\cite{nevo2021simple} & \multicolumn{1}{c|}{43076}  & \multicolumn{1}{c|}{46154}  & \multicolumn{1}{c|}{49192}  & \multicolumn{1}{c|}{50213}  & \multicolumn{1}{c|}{55077}  & \multicolumn{1}{c|}{383345}  & \multicolumn{1}{c|}{409400}  & \multicolumn{1}{c|}{434396}  & \multicolumn{1}{c|}{463339}  & 492568  \\ \cline{2-12} 
 & Ours  & \multicolumn{1}{c|}{15094}  & \multicolumn{1}{c|}{15887}  & \multicolumn{1}{c|}{16185}  & \multicolumn{1}{c|}{15940}  & \multicolumn{1}{c|}{17154}  & \multicolumn{1}{c|}{101832}  & \multicolumn{1}{c|}{108228}  & \multicolumn{1}{c|}{113779}  & \multicolumn{1}{c|}{119942}  & 125969  \\ \hline
\multirow{2}{*}{$2^{21}$} & NTY~\cite{nevo2021simple} & \multicolumn{1}{c|}{91441}  & \multicolumn{1}{c|}{92183}  & \multicolumn{1}{c|}{99581}  & \multicolumn{1}{c|}{109904} & \multicolumn{1}{c|}{107423} & \multicolumn{1}{c|}{837114}  & \multicolumn{1}{c|}{902238}  & \multicolumn{1}{c|}{946369}  & \multicolumn{1}{c|}{998135}  & 1069145 \\ \cline{2-12} 
 & Ours  & \multicolumn{1}{c|}{29694}  & \multicolumn{1}{c|}{33125}  & \multicolumn{1}{c|}{30514}  & \multicolumn{1}{c|}{29988}  & \multicolumn{1}{c|}{31427}  & \multicolumn{1}{c|}{200637}  & \multicolumn{1}{c|}{213479}  & \multicolumn{1}{c|}{228281}  & \multicolumn{1}{c|}{238873}  & 253829  \\ \hline
\multirow{2}{*}{$2^{22}$} & NTY~\cite{nevo2021simple} & \multicolumn{1}{c|}{-} & \multicolumn{1}{c|}{-} & \multicolumn{1}{c|}{-} & \multicolumn{1}{c|}{-} & \multicolumn{1}{c|}{-} & \multicolumn{1}{c|}{-}  & \multicolumn{1}{c|}{-}  & \multicolumn{1}{c|}{-}  & \multicolumn{1}{c|}{-}  & -  \\ \cline{2-12} 
 & Ours  & \multicolumn{1}{c|}{64822}  & \multicolumn{1}{c|}{67968}  & \multicolumn{1}{c|}{66434}  & \multicolumn{1}{c|}{69497}  & \multicolumn{1}{c|}{67659}  & \multicolumn{1}{c|}{406194}  & \multicolumn{1}{c|}{430619}  & \multicolumn{1}{c|}{457885}  & \multicolumn{1}{c|}{485455}  & 511640  \\ \hline
\multirow{2}{*}{$2^{23}$} & NTY~\cite{nevo2021simple} & \multicolumn{1}{c|}{-} & \multicolumn{1}{c|}{-} & \multicolumn{1}{c|}{-} & \multicolumn{1}{c|}{-} & \multicolumn{1}{c|}{-} & \multicolumn{1}{c|}{-}  & \multicolumn{1}{c|}{-}  & \multicolumn{1}{c|}{-}  & \multicolumn{1}{c|}{-}  & -  \\ \cline{2-12} 
 & Ours  & \multicolumn{1}{c|}{138428} & \multicolumn{1}{c|}{137397} & \multicolumn{1}{c|}{136229} & \multicolumn{1}{c|}{138187} & \multicolumn{1}{c|}{138208} & \multicolumn{1}{c|}{809523}  & \multicolumn{1}{c|}{862368}  & \multicolumn{1}{c|}{916236}  & \multicolumn{1}{c|}{970604}  & 1022665 \\ \hline
\multirow{2}{*}{$2^{24}$} & NTY~\cite{nevo2021simple} & \multicolumn{1}{c|}{-} & \multicolumn{1}{c|}{-} & \multicolumn{1}{c|}{-} & \multicolumn{1}{c|}{-} & \multicolumn{1}{c|}{-} & \multicolumn{1}{c|}{-}  & \multicolumn{1}{c|}{-}  & \multicolumn{1}{c|}{-}  & \multicolumn{1}{c|}{-}  & -  \\ \cline{2-12} 
 & Ours  & \multicolumn{1}{c|}{282774} & \multicolumn{1}{c|}{281785} & \multicolumn{1}{c|}{288982} & \multicolumn{1}{c|}{281654} & \multicolumn{1}{c|}{295784} & \multicolumn{1}{c|}{1624301} & \multicolumn{1}{c|}{1721287} & \multicolumn{1}{c|}{1842714} & \multicolumn{1}{c|}{1949102} & 2053595 \\ \hline
\end{tabular}}
\end{table*}

\begin{table}[htb]
\centering
\caption{Communication cost (in MB) of MPSI protocols.}
\label{tab:COM}
\renewcommand\arraystretch{1.2}
\setlength{\tabcolsep}{1mm}{
\begin{tabular}{|c|c|ccc|}
\hline
\multirow{2}{*}{\textbf{Set size}}  & \multirow{2}{*}{\textbf{Protocol}} & \multicolumn{3}{c|}{\textbf{Communication cost (MB)}} \\ \cline{3-5} 
  &  & \multicolumn{1}{c|}{$|\mathcal{C}|=4$} & \multicolumn{1}{c|}{$|\mathcal{C}|=10$}  & $|\mathcal{C}|=15$  \\ \hline
\multirow{4}{*}{$2^{12}$} & KMPRT~\cite{kolesnikov2017practical} & \multicolumn{1}{c|}{19.68} & \multicolumn{1}{c|}{147.6} & 344.4 \\ \cline{2-5} 
  & KMPRT(AUG)~\cite{kolesnikov2017practical} & \multicolumn{1}{c|}{7.36}  & \multicolumn{1}{c|}{18.4}  & 27.75 \\ \cline{2-5} 
  & NTY~\cite{nevo2021simple} & \multicolumn{1}{c|}{7.36}  & \multicolumn{1}{c|}{18.4}  & 27.75 \\ \cline{2-5} 
  & Ours & \multicolumn{1}{c|}{\textbf{1.6}} & \multicolumn{1}{c|}{\textbf{5.20}}  & \textbf{8.24}  \\ \hline
\multirow{4}{*}{$2^{16}$} & KMPRT~\cite{kolesnikov2017practical}  & \multicolumn{1}{c|}{311.2} & \multicolumn{1}{c|}{2334.1}  & 5446.35 \\ \cline{2-5} 
  & KMPRT(AUG)~\cite{kolesnikov2017practical} & \multicolumn{1}{c|}{126.76}  & \multicolumn{1}{c|}{316.9} & 475.35  \\ \cline{2-5} 
  & NTY~\cite{nevo2021simple}  & \multicolumn{1}{c|}{126.76}  & \multicolumn{1}{c|}{316.9} & 475.35  \\ \cline{2-5} 
  & Ours & \multicolumn{1}{c|}{\textbf{21.53}} & \multicolumn{1}{c|}{\textbf{79.52}} & \textbf{126.4}  \\ \hline
\multirow{4}{*}{$2^{20}$} & KMPRT~\cite{kolesnikov2017practical}  & \multicolumn{1}{c|}{5608}  & \multicolumn{1}{c|}{42080} & 98205 \\ \cline{2-5} 
  & KMPRT(AUG)~\cite{kolesnikov2017practical}  & \multicolumn{1}{c|}{2225.56} & \multicolumn{1}{c|}{5563.9}  & 8345.85 \\ \cline{2-5} 
  & NTY~\cite{nevo2021simple}  & \multicolumn{1}{c|}{2225.56} & \multicolumn{1}{c|}{5563.9}  & 8345.85 \\ \cline{2-5} 
  & Ours & \multicolumn{1}{c|}{\textbf{357.84}}  & \multicolumn{1}{c|}{\textbf{1253.42}} & \textbf{2106.23} \\ \hline
\end{tabular}}
\end{table}

\subsection{Experimental analysis}
\subsubsection{Experimental setup} Our MPSI was implemented based on libOTe~(https://github.com/osu-crypto/libOTe) using C++-11. We compared our MPSI with the state-of-the-art ones, i.e., the KMPRT protocol~\cite{kolesnikov2017practical} and the NTY protocol~\cite{nevo2021simple}. All protocols in comparison were run on Ubuntu 22.04 servers with $2\times$ Intel(R) Xeon(R) 2.10 GHz CPU and 256GB RAM. The input length was set to be 128 bits, the statistical security parameter is $\kappa = 40$, and the computational security parameter is $\lambda = 128$. All protocols were tested in both the local area network (LAN) and wide area network (WAN) settings, the LAN setting has 0.2 milliseconds round-trip latency and 5000 Mbps network bandwidth, and the WAN setting has 2 milliseconds round-trip latency and 200 Mbps network bandwidth. The dataset sizes ranges from $2^{12}$ to $2^{20}$, the number of parties ranges from $4$ to $15$.

\subsubsection{Comparison on total running time} Table~\ref{tab:LAN+WAN} shows the total running time of the related protocols. It is evident that as the dataset size expands, our MPSI protocol not only maintains a shorter overall execution time compared to other protocols but also increasingly outperforms them. This improvement is primarily attributed to the substantial reduction in communication overhead achieved by our MPSI scheme as the dataset grows, as clearly demonstrated by comparing experimental results in WAN and LAN environments. Fig.~\ref{fig:lan} and Fig.~\ref{fig:wan} further illustrates the comparative trends of the MPSI protocol's execution time as the number of participants increases across various dataset sizes in the LAN and WAN settings, respectively. It is evident that the execution time of our MPSI protocol grows linearly at a very slow pace. Particularly in WAN settings, our protocol yields the best results compared to all others.

\subsubsection{Comparison over large data sets} For data sets over the size of $2^{20}$, and for the number of participants, values greater than or equal to 16. We only selected NTY~\cite{nevo2021simple} and our own protocol for comparison. The NTY protocols was unable to run due to insufficient memory when the data volume exceeded 21. The experimental results are shown in Table \ref{tab:bigdataLAN+WAN}.

\subsubsection{Comparison on communication overhead} Table~\ref{tab:COM} demonstrates that our scheme substantially reduces communication overhead. Specifically, with a dataset size of $2^{20}$ and $15$ participants, our protocol achieves a $97.8\%$ reduction compared to the KMPRT protocol and a $74.8\%$ reduction compared to the NTY protocol.

\section{Conclusion}\label{sec:conclusion}
In this paper, we propose an efficient Multi-Party Private Set Intersection (MPSI) protocol aimed at addressing the bottlenecks in computational and communication overhead in existing MPSI protocols. By constructing a new multi-party sequential oblivious pseudorandom function (MP-SOPRF), our protocol not only enhances computational efficiency but also significantly reduces the communication cost required for multi-party interactions. Our protocol is the first MPSI protocol based solely on a ring topology. Experiment results show that our MPSI achieves an approximate 2.87$\times$ acceleration in total running time and achieves a $74.8\%$ reduction about communication, which is crucial for large-scale data set applications.

\ifCLASSOPTIONcompsoc
\else
\fi




%


\end{document}